\newcommand{\A}{{\cal A}}
\newcommand{\D}{{\cal D}}
\newcommand{\R}{{\cal R}}
\newcommand{\RT}{\texttt{RT}\xspace}
\renewcommand{\P}{{\cal P}}
\newcommand{\G}{{\cal G}}
\newcommand{\Z}{\mathbb{Z}}
\newcommand{\Q}{\mathbb{Q}}
\newcommand{\Nat}{\mbox{I$\!$N}}
\newcommand{\Real}{\mbox{I$\!$R}}
\newcommand{\St}{\mbox{St}}
\newcommand{\nSt}{\mbox{nSt}^r}
\newcommand{\dist}{\mbox{dist}}
\newcommand{\Pot}{\mbox{Pot}}
\newcommand{\PO}{Player~$1$\xspace}
\newcommand{\PT}{Player~$2$\xspace}
\newcommand{\PLi}{Player~$i$\xspace}
\newcommand{\thresh}{\texttt{Th}\xspace}
\newcommand{\MP}{\texttt{MP}\xspace}
\newcommand{\THRESHBUDG}{THRESH-BUD\xspace}
\newcommand{\ATHRESHBUDG}{APPROX-THRESH-BUD\xspace}
\newtheorem{lemma}{Lemma}
\newtheorem{theorem}[lemma]{Theorem}
\newtheorem{xmpl}[lemma]{Example}
\newenvironment{example}{\begin{xmpl}\rm}{\end{xmpl}}
\newtheorem{rmark}[lemma]{Remark}
\newenvironment{remark}{\begin{rmark}\rm}{\end{rmark}}
\theoremstyle{definition}
\newtheorem{definition}[lemma]{Definition}
\newenvironment{claim}{{\par \vspace{0.07cm} \noindent{\bf Claim:}}}{\par \vspace{0.07cm}}
\newcommand{\stam}[1]{}
\newcommand{\zug}[1]{\langle #1  \rangle}
\newcommand{\set}[1]{\{ #1 \}}
\newcommand{\Max}{{Max}\xspace}
\newcommand{\Min}{{Min}\xspace}
\title{Infinite-Duration Poorman-Bidding Games\thanks{This paper is a full version of \cite{AHI18}.
This research was supported in part by the Austrian Science Fund (FWF) under grants S11402-N23 (RiSE/SHiNE), Z211-N23 (Wittgenstein Award), and M 2369-N33 (Meitner fellowship).}}
\author{Guy Avni\thanks{guy.avni@ist.ac.at}  \ \ \ Thomas A. Henzinger\thanks{tah@ist.ac.at} \ \ \ Rasmus Ibsen-Jensen\thanks{ribsen@ist.ac.at} \\
IST Austria}
\date{}
\begin{document}
\maketitle
\thispagestyle{empty}

\begin{abstract}
In two-player games on graphs, the players move a token through a graph to produce an infinite path, which determines the winner or payoff of the game. Such games are central in formal verification since they model the interaction between a non-terminating system and its environment. We study {\em bidding games} in which the players bid for the right to move the token. Bidding games with variants of first-price auctions were previously studied: in each round, the players simultaneously submit bids, the higher bidder moves the token, and, in {\em Richman} bidding, pays his bid to the other player whereas in {\em poorman} bidding, pays his bid to the ``bank''. While reachability poorman games have been studied before, we present, for the first time, results on {\em infinite-duration} poorman games. A central quantity in these games is the {\em ratio} between the two players' initial budgets. We show that the favorable properties of reachability poorman games extend to complex qualitative objectives such as parity, similarly to the Richman case: each vertex has a {\em threshold value}, which is a necessary and sufficient ratio with which a player can achieve a goal. Our most interesting results concern quantitative poorman games, namely mean-payoff poorman  games, where we construct optimal strategies depending on the initial ratio. The crux of the proof shows that strongly-connected mean-payoff poorman games are equivalent to {\em biased random-turn games}. The equivalence in itself is interesting, because it does not hold for reachability poorman games and it is richer than the equivalence with uniform random-turn games that Richman bidding exhibit. We also solve the complexity problems that arise in poorman games. 
\end{abstract}

\section{Introduction}
Two-player infinite-duration games on graphs are a central class of games in formal verification \cite{AG11} and have deep connections to foundations of logic \cite{Rab69}. They are used to model the interaction between a system and its environment, and the problem of synthesizing a correct system then reduces to finding a winning strategy in a graph game \cite{PR89}. Theoretically, they have been widely studied. For example, the problem of deciding the winner in a parity game is a rare problem that is in NP and coNP \cite{Jur98}, not known to be in P, and for which a quasi-polynomial algorithm was only recently discovered \cite{CJ+17}. 

A graph game proceeds by placing a token on a vertex in the graph, which the players move throughout the graph to produce an infinite path (``play'') $\pi$. The game is zero-sum and $\pi$ determines the winner or payoff. Two ways to classify graph games are according to the type of {\em objectives} of the players, and according to the {\em mode of moving} the token. For example, in {\em reachability games}, the objective of \PO is to reach a designated vertex $t$, and the objective of \PT is to avoid $t$. An infinite play $\pi$ is winning for \PO iff it visits $t$. The simplest mode of moving is {\em turn based}: the vertices are partitioned between the two players and whenever the token reaches a vertex that is controlled by a player, he decides how to move the token.

We study a new mode of moving in infinite-duration games, which is called {\em bidding}, and in which the players {\em bid} for the right to move the token. The bidding mode of moving was introduced in \cite{LLPSU99,LLPU96} for reachability games, where two variants of {\em first-price auctions} where studied: Each player has a budget, and before each move, the players submit sealed bids simultaneously, where a bid is legal if it does not exceed the available budget, and the higher bidder moves the token. The bidding rules differ in where the higher bidder pays his bid. In {\em Richman} bidding (named after David Richman), the higher bidder pays the lower bidder. In {\em poorman bidding}, which is the bidding rule that we focus on in this paper, the higher bidder pays the ``bank''. Thus, the bid is deducted from his budget and the money is lost. Note that while the sum of budgets is constant in Richman bidding, in poorman bidding, the sum of budgets shrinks as the game proceeds. One needs to devise a mechanism that resolves ties in biddings, and our results are not affected by the tie-breaking mechanism that is used. 

Bidding games naturally model decision-making settings in which agents need to invest resources in an ongoing manner. We argue that the modelling capabilities of poorman bidding exceed those of Richman bidding. Richman bidding is restricted to model ``scrip'' systems that use internal currency to avoid free riding and guarantee fairness. Poorman bidding, on the other hand, model a wider variety of settings since the bidders pay their bid to the auctioneer. We illustrate a specific application of infinite-duration poorman bidding in reasoning about ongoing stateful auctions, which we elaborate on in Section~\ref{sec:application}.
\begin{example}
\label{ex:auction}
Consider a setting in which two buyers compete in auction to buy $k \in \Nat$ goods that are ``rented'' for a specific time duration. For example, a webpage has $k$ ad slots, and each slot is sold for a fixed time duration, e.g., one day. At time point $1 \leq i \leq k$, good~$i$ is put up for sale in a second-price auction, where the higher bidder pays the auctioneer and keeps the good for the fixed duration of time. We focus on the first buyer. Each good entails a reward for him, and we are interested in devising a bidding strategy that maximize the long-run average of the rewards. For example, the simple case of a site with one ad slot is represented by the game that is depicted in Fig.~\ref{fig:loops}, where the vertex $v_1$ represents the case that \PO's ad appears and $v_2$ represents the case that \PT's ad appears. \PO's goal is to maximize the long-run average time that his ad appears, which intuitively amounts to ``staying'' as much time as possible in $v_1$. \PO's goal is formally described as a mean-payoff objective, which we elaborate on below. Our results on mean-payoff poorman games allow us to construct an optimal strategy for the players.\hfill\qed
\end{example}

Another advantage of poorman bidding over the Richman bidding is that their definition generalizes easily to domains in which the restriction of a fixed sum of budgets is an obstacle. For example, in ongoing auctions as described in the example above, often a good is sold to multiple buyers with partial information of the budgets. These are two orthogonal concepts that have not been studied in bidding games and are both easier to define in poorman bidding rather than in Richman bidding.



A central quantity in bidding games is the {\em ratio} of the players' initial budgets. Formally, let $B_i \in \Real_{\geq0}$, for $i \in \set{1,2}$, be Player~$i$'s initial budget. The {\em total initial budget} is $B = B_1 + B_2$  and Player~$i$'s {\em initial ratio} is $B_i/B$. The first question that arises in the context of bidding games is a necessary and sufficient initial ratio for a player to guarantee winning. For reachability games, it was shown in \cite{LLPSU99,LLPU96} that such {\em threshold ratios} exist in every  reachability  Richman and poorman game: for every vertex $v$ there is a ratio $\thresh(v) \in [0,1]$ such that (1) if \PO's initial ratio exceeds $\thresh(v)$, he can guarantee winning, and (2) if his initial ratio is less than $\thresh(v)$, \PT can guarantee winning. This is a central property of the game, which is a form of {\em determinacy}, and shows that no ties can occur.\footnote{When the initial budget of \PO is exactly $\thresh(v)$, the winner of the game depends on how we resolve draws in biddings.} 

An intriguing equivalence was observed in \cite{LLPSU99,LLPU96} between {\em random-turn games} \cite{PSSW07} and reachability bidding games, but only with Richman-bidding. For $r \in [0,1]$, the random-turn game that corresponds to a bidding game $\G$ w.r.t.\ $r$, denoted $\RT^r(\G)$, is a special case of stochastic game \cite{Con90}: rather than bidding for moving, in each round, independently, \PO is chosen to move with probability $r$ and \PT moves with the remaining probability of $1-r$. Richman reachability games are equivalent to uniform random-turn games, i.e., with $r=0.5$ (see Theorem~\ref{thm:reach} for a precise statement of the equivalence). For reachability poorman-bidding games, no such equivalence is known and it is unlikely to exist since there are (simple) finite poorman games with irrational threshold ratios. The lack of such an equivalence makes poorman games technically more complicated.

More interesting, from the synthesis and logic perspective, are infinite winning conditions, but they have only been studied in the Richman setting previously \cite{AHC19}. We show, for the first time, existence of threshold ratios in qualitative poorman games with infinite winning conditions such as parity. We show a linear reduction from poorman parity games to poorman reachability games, similarly to the proof in the Richman setting. First, we show that in a strongly-connected game, one of the players wins with any positive initial ratio, thus the {\em bottom strongly-connected components} (BSCCs, for short) of the game graph can be partitioned into ``winning'' for \PO and ``losing'' for \PO. Second, we construct a reachability poorman game in which each player tries to force the game to a BSCC that is winning for him.

Things get more interesting in {\em mean-payoff} poorman games, which are zero-sum quantitative games; an infinite play of the game is associated with a {\em payoff} which is \PO's reward  and \PT's cost, thus we respectively refer to the players in a mean-payoff game as \Max and \Min. The central question in these games is: Given a value $c \in \Q$, what is the initial ratio that is necessary and sufficient for \Max to guarantee a payoff of $c$? More formally, we say that $c$ is the {\em value} with respect to a ratio $r \in [0,1]$ if for every $\epsilon >0$, we have (1) when \Max's initial ratio is $r +\epsilon$, he can guarantee a payoff of at least $c$, and (2) intuitively, \Max cannot hope for more: if \Max's initial ratio is $r-\epsilon$, then \Min can guarantee a payoff of at most $c$.

Our most technically-involved contribution is a construction of optimal strategies in mean-payoff poorman games, which depend on the initial ratio $r \in [0,1]$.  The key component of the solution is a quantitative solution to strongly-connected games, which, similar to parity games, allows us to reduce general mean-payoff poorman games to reachability poorman games by reasoning about the BSCCs of the graph. Before describing our solution, let us highlight an interesting difference between Richman and poorman bidding. With Richman bidding, it is shown in \cite{AHC19} that a strongly-connected mean-payoff Richman-bidding game has a value that does not depend on the initial ratio and only on the structure of the game. It thus seems reasonable to guess that the initial ratio would not matter with poorman bidding as well. We show, however, that this is not the case; the higher \Max's initial ratio is, the higher the payoff he can guarantee. We demonstrate this phenomenon with the following simple game. Technically, each vertex in the graph has a weight the payoff of an infinite play $\pi$ is defined as follows. The {\em energy} of a prefix $\pi^n$ of length $n$ of $\pi$, denoted $E(\pi^n)$,  is the sum of the weights it traverses. The payoff of $\pi$ is $\lim\inf_{n\to \infty} E(\pi^n)/n$. 

\begin{example}
\label{ex:loops}
Consider the mean-payoff poorman game that is depicted in Figure~\ref{fig:loops}. We take the viewpoint of \Min in this example. We consider the case of $r=\frac{1}{2}$, and claim that the value with respect to $r=\frac{1}{2}$ is $0$. Suppose for convenience that \Min wins ties. 
Note that the players' choices upon winning a bid in the game are obvious, and the difficulty in devising a strategy is finding the right bids. Intuitively, \Min copies \Max's bidding strategy. Suppose, for example, that \Min starts with a budget of $1+\epsilon$ and \Max starts with $1$, for some $\epsilon>0$. A strategy for \Min that ensures a payoff of $0$ is based on a queue of numbers as follows: In round $i$, if the queue is empty \Min bids $\epsilon \cdot 2^{-i}$, and otherwise the maximal number in the queue. If \Min wins, he removes the minimal number from the queue (if non-empty). If \Max wins, \Min adds \Max's winning bid to the queue. For example, suppose \Max's first bid is $0.2$, he wins since \Min bids $\epsilon/2$, and \Min adds $0.2$ to the empty queue. \Min's second bid is $0.2$. Suppose \Max bids $0.3$ in the second turn, thus he wins again. \Min adds $0.3$ to the queue and bids $0.3$ in the third bidding. Suppose \Max bids $0.1$, thus \Min wins and removes $0.3$ from the queue. In the next bidding his bid is $0.2$. 

We make several observations. (1) \Min's strategy is legal: it never bids higher than the available budget. (2) The size of the queue is an upper bound on the energy; indeed, every bid in the queue corresponds to a \Max winning bid that is not ``matched'' (the size is an upper bound since \Min might win biddings when the queue is empty). (3) If \Min's queue fills, it will eventually empty. Indeed, if $b \in \Real$ is in the queue, in order to keep $b$ in the queue, \Max must bid at least $b$, thus eventually his budget runs out. Combining, since the energy is at most $0$ when the queue empties, \Min's strategy guarantees that the energy is at most $0$ infinitely often. Since we use $\lim \inf$ in the definition of the payoff,  \Min guarantees a non-positive payoff. Showing that \Max can guarantee a non-negative payoff with an initial ratio of $\frac{1}{2} + \epsilon$ is harder, and a proof for the general case can be found in Section~\ref{sec:MP}.



We show that the value $c$ decreases with \Max's initial ratio $r$. We set $r=\frac{1}{3}$. Suppose, for example, that \Min's initial budget is $2+\epsilon$ and \Max's initial budget is $1$. We claim that \Min can guarantee a payoff of $-1/3$. His strategy is similar to the one above, only that whenever \Max wins with $b$, \Min pushes $b$ to the queue twice. Observations (1-3) still hold. The difference is that now, since every \Max win is matched by two \Min wins, when the queue empties, the number of \Min wins is at least twice as much as \Max's wins, and the claim follows. 

This example shows the contrast between Richman and poorman bidding. When using Richman bidding, \Min can guarantee a payoff of $0$ with every initial budget, and cannot guarantee $-\epsilon$, even with a ratio of $1-\delta$, for any $\epsilon,\delta > 0$.\hfill\qed
\end{example}

\begin{figure}[t]
\begin{minipage}[b]{0.4\linewidth}
\centering
\includegraphics[height=1.2cm]{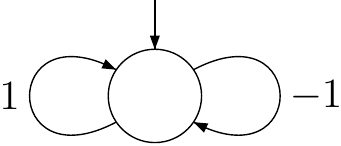}
\caption{A mean-payoff game.}
\label{fig:loops}

\end{minipage}
\begin{minipage}[b]{0.51\linewidth}
\centering
\includegraphics[height=1.2cm]{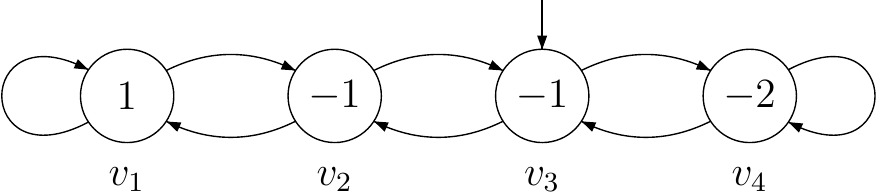}
\caption{A second mean-payoff game.}
\label{fig:ex-2}
\end{minipage}
\end{figure}

In order to solve strongly-connected mean-payoff poorman games, we identify the following equivalence with biased random-turn games. Consider a strongly-connected mean-payoff poorman game $\G$ and a ratio $r \in [0,1]$. Recall that $\RT^r(\G)$ is the random-turn game in which \Max is chosen with probability $r$ and \Min with probability $1-r$. Since $\G$ is a mean-payoff game, the game $\RT^r(\G)$ is a stochastic mean-payoff game. Its value, denoted $\MP(\RT^r(\G))$, is the optimal expected payoff that the players can guarantee, and is known to exist~\cite{MN81}. For every $\epsilon > 0$, we show that when \Max's initial ratio is $r+\epsilon$, he can guarantee a payoff of $\MP(\RT^r(\G))$, and he cannot do better: \Min can guarantee a payoff of at most $\MP(\RT^r(\G))$ with an initial ratio of $1-r+\epsilon$. Thus, the value of $\G$ w.r.t. $r$ equals $\MP(\RT^r(\G))$. One way to see this result is as a form of derandomization: we show that \Max has a deterministic bidding strategy in $\G$ that ensures a behavior that is similar to the random behavior of $\RT^r(\G)$. We find this equivalence between the two models particularly surprising due to the fact that, unlike Richman bidding, an equivalence between random-turn games and reachability poorman games is unlikely to exist. Second, while Richman games are equivalent to uniform random-turn games, we are not aware of any known equivalences between bidding games and biased random-turn games, i.e., $r\neq 0.5$.

Recall that a strongly-connected mean-payoff Richman-bidding game $\G$ has a value $c$ that does not depend on the initial ratio. The value comes from an equivalence with uniform random-turn games \cite{AHC19}: the value $c$ of $\G$ under Richman bidding equals the value of the uniform stochastic mean-payoff game $\RT^{0.5}(\G)$. That is, with Richman bidding, \Min can guarantee $c$ with an initial ratio of $\delta$, and cannot guarantee $c-\epsilon$ with an initial ratio of $1-\delta$, for every $\epsilon,\delta> 0$. One interesting corollary is that the value of $\G$ when viewed as a Richman game equals the value of $\G$ when viewed as a poorman game with respect to the initial ratio $0.5$. We are not aware of previous such connections between the two bidding rules.

Finally, we address, for the first time, complexity issues in poorman games; namely, we study the problem of finding threshold ratios in poorman games. 
We show that for qualitative games, the corresponding decision problem is in PSPACE using the existential theory of the reals \cite{Can88}. 
 For mean-payoff games, the problem of finding the value of the game with respect to a given ratio is also in PSPACE for general games, and for strongly-connected games, we show the value can be found in NP and coNP, and even in P for strongly-connected games with out-degree $2$. 


\paragraph*{Related work}
As mentioned above, bidding games can model ongoing auctions, like the ones that are used in internet companies such as Google to sell advertisement slots \cite{Mut09}. Sequential auctions, which are also ongoing, have been well studied, e.g., \cite{LST12b,Web81}, and let us specifically point \cite{GS01,Rod09}, which, similar to bidding games, studies two-player sequential auctions with perfect information. Bidding games differ from these models in two important aspects: (1) bidding games are zero-sum games, and (2) the budgets that are used for bidding do not contribute to the utility and are only used to determine which player moves. Point (2) implies that bidding games are particularly appropriate to model settings in which the budget has little or no value, similar in spirit to the well-studied {\em Colonel Blotto games} \cite{Bor21}. A dynamic version of Colonel Blotto games called {\em all-pay bidding games} has been recently studied \cite{AIT20}.  Non-zero-sum Richman-bidding games have been used to reason about ongoing negotiations \cite{MKT18}. 

Graph games are popular to reason about systems in formal methods \cite{handbookMC} and about multi-agent systems in AI \cite{AHK02}. Bidding games extend the modelling capabilities of these games and allow reasoning about multi-process systems in which a scheduler accepts payment in return for priority. {\em Blockchain} technology is one example of such a technology. Simplifying the technology, a blockchain is a log of transactions issued by clients and maintained by {\em miners}. In order to write to the log, clients send their transactions and an offer for a transaction fee to a miner, who has freedom to decide transaction priority. We expect that a more precise modelling of such systems will assist in their verification against attacks, which is a problem of special interest since bugs can result in significant losses of money (see for example,  \cite{CGV18} and a description of an attack \url{http://bit.ly/2obzyE7}). Note that poorman bidding models such settings better than Richman bidding since transaction fees are paid to the scheduler (the miners) rather than the other player. Richman bidding is appropriate when modelling ``scrip systems'' that use internal currency to prevent free-riding \cite{KFH12}, and are popular in databases for example. 

In this work, we show that mean-payoff poorman games are equivalent to biased random-turn games. Thus, there is a contrast with mean-payoff Richman games, which are equivalent to uniform random-turn games. To better understand these differences between the seemingly similar bidding rules, mean-payoff {\em taxman} games where studied in \cite{AHZ19}. Taxman bidding were defined and studied in \cite{LLPSU99} for reachability objectives span the spectrum between Richman and poorman bidding. They are parameterized by a constant $\tau \in [0,1]$: portion $\tau$ of the winning bid is paid to the other player, and portion $1-\tau$ to the bank. Thus, with $\tau = 1$ we obtain Richman bidding and with $\tau=0$, we obtain poorman bidding. It was shown that the value of a mean-payoff taxman bidding game $\G$ parameterized by $\tau$ and with initial ratio $r$ equals $\MP(\RT^{F(\tau, r)}(\G))$, for $F(\tau, r) = \frac{r+\tau\cdot (1-r)}{1+\tau}$.

To the best of our knowledge, since their introduction, poorman games have not been studied. Motivated by recreational games, e.g., bidding chess \cite{BP09,LW18}, {\em discrete bidding games} with Richman bidding rules are studied in \cite{DP10}, where the money is divided into chips, so a bid cannot be arbitrarily small unlike the bidding games we study. Infinite-duration discrete bidding games with Richman bidding and various tie-breaking mechanisms have been studied in \cite{AAH19}, where they were shown to be a largely determined sub-class of concurrent games.


\section{Preliminaries}
A graph game is played on a directed graph $G = \zug{V, E}$, where $V$ is a finite set of vertices and $E \subseteq V \times V$ is a set of edges. The {\em neighbors} of a vertex $v \in V$, denoted $N(v)$, is the set of vertices $\set{u \in V: \zug{v,u} \in E}$, and we say that $G$ has out-degree $2$ if for every $v \in V$, we have $|N(v)| = 2$. A {\em path} in $G$ is a finite or infinite sequence of vertices $v_1,v_2,\ldots$ such that for every $i \geq 1$, we have $\zug{v_i,v_{i+1}} \in E$. 

\paragraph{Objectives} 
An objective $O$ is a set of infinite paths. In reachability games, \PO has a target vertex $v_R$ and an infinite path is winning for him if it visits $v_R$. In {\em parity} games each vertex has a parity index in $\set{1,\ldots, d}$, and an infinite path is winning for \PO iff the maximal parity index that is visited infinitely often is odd. We also consider games that are played on a weighted graph $\zug{V, E, w}$, where $w: V \rightarrow \Q$. Consider an infinite path $\pi = v_1,v_2,\ldots$. For $n \in \Nat$, we use $\pi^n$ to denote the prefix of length $n$ of $\pi$. We call the sum of weights that $\pi^n$ traverses the {\em energy} of the game, denoted $E(\pi^n)$. Thus, $E(\pi^n) = \sum_{1 \leq j < n} w(v_j)$. In {\em energy games}, the goal of \PO is to keep the energy level positive, thus he wins an infinite path iff for every $n \in \Nat$, we have $E(\pi^n) > 0$. 
\stam{
\begin{remark}
\label{rem:energy}
Note that an energy game is a succinctly-represented infinite-state  reachability game. Consider an energy game $\G = \zug{V, E, w}$, where for simplicity assume $w: V \rightarrow \Nat$. We can construct a reachability game $\G'$ over the vertices $V \times (\Nat \cup \set{0})$, where we associate the position in $\G$ in which the energy is $k \in \Nat$ and the token is located on a vertex $v$, with reaching $\zug{v,k}$ in $\G'$. Recall that the goal of \PO in $\G$ is to drop the energy to $0$. Accordingly, the target of \PO in $\G'$ is reaching a vertex $\zug{v,0}$, for some $v \in V$.
\end{remark}
}
Unlike the previous objectives, a path in a {\em mean-payoff} game is associated with a payoff, which is \PO's reward  and \PT's cost. Accordingly, in mean-payoff games, we refer to \PO as \Min  and \PT as \Max. We define the payoff of $\pi$ to be $\lim\inf_{n \to \infty} \frac{1}{n} E(\pi^n)$. We say that \Max wins an infinite path of a mean-payoff game if the payoff is non-negative.

\paragraph{Strategies and plays}
A {\em strategy} prescribes to a player which {\em action} to take in a game, given a finite {\em history} of the game, where we define these two notions below. For example, in turn-based games, a strategy takes as input, the sequence of vertices that were visited so far, and it outputs the next vertex to move to. In bidding games, histories and strategies are more complicated as they maintain the information about the bids and winners of the bids. Formally, a history is a sequence $\tau = v_0,\zug{v_1,b_1, \ell_1},\zug{v_2, b_2, \ell_2}, \ldots, \zug{v_k, b_k, \ell_k} \in V \cdot (V \times \Real \times \set{1,2})^*$, where, for $j \geq 1$, in the $j$-th round, the token is placed on vertex $v_{j-1}$, the winning bid is $b_j$, and the winner is Player~$\ell_j$, and Player~$\ell_j$ moves the token to vertex $v_j$. A strategy prescribes an action $\zug{b, v}$, where $b$ is a bid that does not exceed the available budget and $v$ is a vertex to move to upon winning. The winner of the bidding is the player who bids higher, where we assume there is some mechanism to resolve draws, and our results are not affected by what the mechanism is. More formally, for $i \in \set{1,2}$, let $B_i$ be the initial budgets of \PLi, and, for a finite history $\pi$, let $W_i(\pi)$ be the sum of \PLi winning bids throughout $\pi$. In Richman bidding, the winner of a bidding pays the loser, thus \PO's budget following $\pi$ is $B_1 - W_1 + W_2$. In poorman bidding, the winner pays the ``bank'', thus \PO's budget following $\pi$ is $B_1 - W_1$. Note that in poorman bidding, the loser's budget does not change following a bidding. An initial vertex together with two strategies for the players determine a unique infinite {\em play} $\pi$ for the game. The vertices that $\pi$ visits form an infinite path $path(\pi)$. \PO wins $\pi$ according to an objective $O$ iff $path(\pi) \in O$. We call a strategy $f$ {\em winning} for \PO if for every strategy $g$ of \PT the play they determine satisfies $O$. Winning strategies for \PT are defined dually. 

\begin{definition} ({\bf Initial ratio})
Suppose the initial budget of Player~$i$ is $B_i$, for $i \in \set{1,2}$, then the {\em total initial budget} is $B = B_1 + B_2$ and Player~$i$'s {\em initial ratio} is $B_i/B$. We assume $B > 0$.
\end{definition}

The first question that arrises in the context of bidding games asks what is the necessary and sufficient initial ratio to guarantee an objective. We generalize the definition in \cite{LLPSU99,LLPU96}:

\begin{definition} ({\bf Threshold ratios})
Consider a poorman or Richman game $\G$, a vertex $v$, and an initial ratio $r$ and objective $O$ for \PO. The threshold ratio in $v$, denoted $\thresh(v)$, is a ratio in $[0,1]$ such that
\begin{itemize}
\item if $r > \thresh(v)$, then \PO has a winning strategy that guarantees $O$ is satisfied, and 
\item if $r < \thresh(v)$, then \PT has a winning strategy that violates $O$.
\end{itemize}
\end{definition}

Recall that we say that \Max wins a mean-payoff game $\G = \zug{V, E, w}$ if the mean-payoff value is non-negative. Finding $\thresh(v)$ for a vertex $v$ in $\G$ thus answers the question of what is the minimal ratio of the initial budget that guarantees winning. A more refined question asks what is the optimal payoff \Max can guarantee with an initial ratio $r$. Formally, for a constant $c \in \Q$, let $\G^c$ be the mean-payoff game that is obtained from $\G$ by decreasing all weights by $c$.

\begin{definition}
({\bf Mean-payoff values}) Consider a mean-payoff game $\G = \zug{V, E, w}$ and a ratio $r \in [0,1]$. The value of $\G$ with respect to $c$, denoted $\MP^r(\G, v)$, is such that $\thresh(v) = r$ in $\G^c$.
\end{definition}

\paragraph{Random-turn games}
In a {\em stochastic game} the vertices of the graph are partitioned between two players and a {\em nature} player. As in turn-based games, whenever the game reaches a vertex of Player~$i$, for $i =1,2$, he choses how the game proceeds, and whenever the game reaches a vertex $v$ that is controlled by nature, the next vertex is chosen according to a probability distribution that depends only on $v$. 

Consider a game $\G = \zug{V, E}$. The {\em random-turn game} with ratio $r \in [0,1]$ that is associated with $\G$ is a stochastic game that intuitively simulates the fact that \PO chooses the next move with probability $r$ and \PT chooses with probability $1-r$. Formally, we define $\RT^r(\G) = \zug{V_1, V_2, V_N, E, \Pr, w}$, where each vertex in $V$ is split into three vertices, each controlled by a different player, thus for $\alpha \in \set{1, 2, N}$, we have $V_\alpha = \set{v_\alpha: v \in V}$, nature vertices simulate the fact that \PO chooses the next move with probability $r$, thus $\Pr[v_N,v_1] = r = 1- \Pr[v_N,v_2]$, and reaching a vertex that is controlled by one of the two players means that he chooses the next move, thus $E = \set{\zug{v_\alpha, u_N}: \zug{v,u} \in E \text{ and } \alpha \in \set{1, 2}}$. When $\G$ is weighted, then the weights of $v_1, v_2$, and $v_N$ equal that of $v$. 

Fixing two strategies $f$ and $g$ for the two players in a stochastic game results in a Markov chain, which in turn gives rise to a probability distribution $D(f,g)$ over infinite sequences of vertices. A strategy $f$ is {\em optimal} w.r.t. an objective $O$ if it maximizes $\sup_f \inf_g \Pr_{\pi \sim D(f,g)}[\pi \in O]$. For the objectives we consider, it is well-known that optimal strategies exist, which are, in fact, {\em positional}; namely, strategies that only depend on the current position of the game and not on its history.

\begin{definition}
({\bf Values})
Let $r \in [0,1]$. For a qualitative game $\G$, the {\em value} of $\RT^r(\G)$, denoted $val(\RT^r(\G))$, is the probability that \PO wins when he plays optimally. For a mean-payoff game $\G$, the {\em mean-payoff value} of $\RT^r(\G)$, denoted $\MP(\RT^r(\G))$, is the maximal expected payoff \Max obtains when he plays optimally.
\end{definition}

\section{Qualitative Poorman Games}
For qualitative objectives, poorman games have mostly similar properties to the corresponding Richman games, though they are technically more complicated than Richman bidding. We start with reachability objectives, which were studied in \cite{LLPU96,LLPSU99}. The objective they study is slightly different than ours and we call it {\em double-reachability}: both players have targets and the game ends once one of the targets is reached. As we show below, for our purposes, the variants are equivalent since there are no draws in finite-state  double-reachability poorman and Richman games. 

Consider a double-reachability game $\G = \zug{V, E, u_1, u_2}$, where, for $i=1,2$, the target of Player~$i$ is $u_i$. In both Richman and poorman bidding, trivially \PO wins in $u_1$ with any initial budget and \PT wins in $u_2$ with any initial budget, thus $\thresh(u_1) = 0$ and $\thresh(u_2) = 1$. For $v \in V$, let $v^+,v^- \in N(v)$ be such that, for every $v' \in N(v)$, we have $\thresh(v^-) \leq \thresh(v') \leq \thresh(v^+)$. 

\begin{theorem}
\label{thm:reach}
\cite{LLPU96,LLPSU99} 
Threshold ratios exist in reachability Richman and poorman games. Moreover, consider a double-reachability game $\G = \zug{V, E, u_1, u_2}$. 
\begin{itemize}
\item In Richman bidding, for $v \in V \setminus \set{u_1, u_2}$, we have $\thresh(v) = \frac{1}{2}\big(\thresh(v^+)+\thresh(v^-)\big)$, and it follows that $\thresh(v) = val(\RT^{0.5}(\G,v))$ and that $\thresh(v)$ is a rational number. 
\item In poorman bidding, for $v \in V \setminus \set{u_1, u_2}$, we have $\thresh(v) = \thresh(v^+)/\big(1-\thresh(v^-)+\thresh(v^+)\big)$. There is a game $\G$ and a vertex $v$ with an irrational $\thresh(v)$.
\end{itemize}
\end{theorem}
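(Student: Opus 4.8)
The plan is to handle both bidding rules uniformly via a \emph{one-step indifference analysis} that simultaneously motivates the two recurrences, produces candidate threshold values, and supplies the bids used in the winning strategies; correctness is then a matter of exhibiting matching strategies for the two players. \textbf{Deriving the recurrences.} Normalize the total budget to $1$ and put the token on $v\notin\set{u_1,u_2}$ with \PO holding ratio $x$. Analyze one bidding round in which \PO bids $b$. Under poorman bidding, if \PO wins he pays the bank and moves to the low-threshold neighbour $v^-$, so his ratio becomes $(x-b)/(1-b)$; if \PT matches and wins, \PO's budget is untouched but the pot shrinks, so his ratio becomes $x/(1-b)$ at the high-threshold neighbour $v^+$. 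Setting these equal to $\thresh(v^-)$ and $\thresh(v^+)$ gives $b=\prnt{\thresh(v^+)-\thresh(v^-)}/\prnt{1-\thresh(v^-)+\thresh(v^+)}$ and exactly the claimed poorman value. Redoing the bookkeeping for Richman, where the winner pays the \emph{loser} so the post-bid ratios are $x-b$ and $x+b$, gives the bid $\tfrac12\prnt{\thresh(v^+)-\thresh(v^-)}$ and the averaging value $\tfrac12\prnt{\thresh(v^+)+\thresh(v^-)}$. Throughout, the boundary conditions are $\thresh(u_1)=0$ and $\thresh(u_2)=1$.

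\textbf{Existence and the two strategic directions.} The assignment $t \mapsto \big(v\mapsto F\prnt{\max_{u\in N(v)}t(u),\min_{u\in N(v)}t(u)}\big)$, with $F$ the appropriate combinator, is monotone on $[0,1]^V$, so value iteration converges to fixed points; restricting to the reduced game in which every vertex reaches both targets pins down a single solution $\thresh$. For \emph{sufficiency} I would let \PO, starting with ratio above $\thresh(v_0)$, bid the critical $b$ above (scaled by his actual budget) and move to $v^-$ on a win. The indifference computation guarantees the invariant that after every round \PO's ratio strictly exceeds the threshold of the current vertex; since $\thresh(u_1)=0$ this is consistent with reaching $u_1$, which is a win. \emph{Necessity} is the dual strategy for \PT aiming at $v^+$ while keeping his ratio above $1-\thresh$.

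\textbf{The main obstacle: termination.} The invariant above only shows \PO never loses, not that $u_1$ is \emph{reached}. This is where I expect the real work, and where the poorman rule is exploited: whenever \PT wins she pays the bank, shrinking the pot while \PO's budget is untouched, so \PO's ratio strictly increases. Hence \PT cannot deflect the token to high-threshold neighbours forever without depleting her budget, after which \PO wins every bid and walks to $u_1$. Turning this into a quantitative progress argument---bounding the surplus growth against the finitely many threshold values, and dealing with ties and with vertices that reach only one target---is the delicate part.

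\textbf{The two corollaries.} For Richman the averaging recurrence is exactly the optimality equation of the uniform random-turn game (\PO routes to the minimizing neighbour, \PT to the maximizing one), giving $\thresh(v)=val(\RT^{0.5}(\G,v))$; since the optimal positional strategies fix which neighbours attain the max and min, $\thresh$ solves a linear system with rational coefficients and is therefore rational. For poorman, rationality fails: I would take the gadget with $v_0\to\set{u_1,v_1}$ and $v_1\to\set{v_0,u_2}$, where $\thresh(v_0)=\thresh(v_1)/\prnt{1+\thresh(v_1)}$ and $\thresh(v_1)=1/\prnt{2-\thresh(v_0)}$ reduce to $t^2+t-1=0$ with root $\thresh(v_1)=(\sqrt5-1)/2$, witnessing the irrational case.
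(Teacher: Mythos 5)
Your derivation of the two recurrences, the invariant-maintaining bid, and the golden-ratio gadget for irrationality all match the paper's argument, and you correctly locate the crux in termination. But the termination sketch as written has a genuine gap. Your claim is that ``whenever \PT wins she pays the bank, shrinking the pot while \PO's budget is untouched, so \PO's ratio strictly increases,'' and that this depletes \PT. The increase in \PO's ratio, however, is only as large as \PT's winning payment, which is bounded below only by \PO's own bid $b$ --- and the critical bid can be $0$. Concretely, take $v_1 \leftrightarrow v_2$ with $v_1 \to a$, $v_2 \to a$, $a \to u_1$, $a \to u_2$: then $\thresh(a)=\thresh(v_1)=\thresh(v_2)=\tfrac12$, so at $v_1$ and $v_2$ all neighbours have equal thresholds and your critical bid is $\prnt{\thresh(v^+)-\thresh(v^-)}/\prnt{1-\thresh(v^-)+\thresh(v^+)}=0$. \PT then wins round $i$ with a bid of $\delta 2^{-i}$, keeps the token cycling between $v_1$ and $v_2$ forever at total cost $<\delta$, the pot never shrinks by more than $\delta$, and $u_1$ is never reached --- while your invariant is maintained throughout. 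So ``strictly increases'' without a uniform lower bound on the increment does not yield a progress measure, and the strategy of bidding exactly the indifference value fails.

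The paper closes exactly this hole with the slush-fund device: \PO's budget is split into a \emph{real budget} $x'$ (which alone maintains the invariant $f(v)=x'/(x'+y')$) and a \emph{slush fund} $\epsilon'$, and the bid is $\Delta(v)\cdot x' + \delta_v\cdot\epsilon'$ rather than $\Delta(v)\cdot x'$ alone. The coefficients $\delta_v$ are calibrated (via the $\dist(v)$ bookkeeping, which also fixes where to move when $f(v^-)$ does not strictly decrease) so that every \PT win multiplies the slush fund by a constant factor $c>1$; since the slush fund cannot exceed the total budget, \PT wins only finitely many times, after which \PO wins $|V|$ biddings in a row and reaches $u_1$. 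Your phrase ``bounding the surplus growth against the finitely many threshold values'' gestures at this, but the additive perturbation of the bid is the missing idea, not a routine quantification: without it the argument fails on the example above. (Your existence argument via monotone value iteration also asserts rather than proves uniqueness of the fixed point on the reduced game; the paper simply cites \cite{LLPSU99} for this, so I would not count it against you, but it is worth flagging.)
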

\begin{proof}
The proof here is similar to \cite{LLPSU99} and is included for completeness, with a slight difference: unlike \cite{LLPSU99}, which assume that every vertex has a path to both targets, we also address the case where one of the targets is not reachable. This will prove helpful when reasoning about infinite-duration bidding games. The Richman case is irrelevant for us and we leave it out.

We start with the two simpler claims. Assume that in a  double-reachability poorman game $\G$, for each vertex $v$, we have $\thresh(v) = \thresh(v^+)/\big(1-\thresh(v^-)+\thresh(v^+)\big)$. We show a double-reachability poorman game with irrational threshold ratios. Consider the game with vertices $u_1, v_1, v_2,$ and $u_2$, and edges $u_1 \leftarrow v_1 \leftrightarrow v_2 \rightarrow u_2$. Solving the equation above we get $\thresh(v_1)=(\sqrt{5}-1)/2$ and $\thresh(v_2)=(3-\sqrt{5})/2$, which are irrational.

Next, we show existence of threshold ratios in a reachability poorman games by reducing them to double-reachability games. Consider a game $\G = \zug{V, E, u_1}$. Let $S \subseteq V$ be the set of vertices that have no path to $u_1$. Since \PO cannot win from any vertex in $S$, we have $\thresh(v) = 1$. Let $\G' = \zug{V', E', u_1, u_2}$ be the double-reachability game that is obtained from $\G$ by setting $V' = V \setminus S$ and \PT's target $u_2$ to be a vertex in $S$. Consider a vertex $v \in V'$. We claim that $\thresh(v)$ in $\G'$ equals $\thresh(v)$ in $\G$. Indeed, if \PO's ratio exceeds $\thresh(v)$ he can draw the game to $u_1$, and if \PT's ratio exceeds $1-\thresh(v)$ he can draw the game to $S$.

Finally, we show that every vertex in a double-reachability game has a threshold ratio. Consider a  double-reachability poorman game $\G = \zug{V, E, u_1,u_2}$. It is shown in \cite{LLPSU99} that there exists a unique function $f: V \rightarrow [0,1]$ that satisfies the following conditions: we have $f(u_1) = 0$ and $f(u_2) = 1$, and for every $v \in V$, we have $f(v) = \frac{f(v^+)}{1+f(v^+)-f(v^-)}$, where $v^+,v^- \in N(v)$ are the neighbors of $v$ that respectively maximize and minimize $f$, i.e., for every $v' \in N(v)$, we have $f(v^-) \leq f(v') \leq f(v^+)$. 

We claim that for every $v \in V$, we have $\thresh(v) = f(v)$.
Our argument will be for \PO and duality gives an argument for \PT. 
Suppose \PO's budget is $f(v)+\epsilon$ and \PT's budget is $1-f(v)$, for some $\epsilon>0$. Note that we implicitly assume that $f(v)<1$. In case $f(v)=1$ we do not show anything, but still, our dual strategy for \PT ensures that $u_2$ is visited, when the initial budget for \PT is positive. We describe a \PO strategy that forces the game to $u_1$. 

Similar to \cite{LLPSU99}, we divide \PO's budget ratio into his {\em real budget} and a {\em slush fund}. 
We will ensure the following invariants:
\begin{enumerate}
\item Whenever we are in state $v$, 
if $x$ is \PO's real budget and $y$ is \PT's budget, then $f(v)=x/(x+y)$.
\item 
Every time \PT wins a bidding the slush fund increases by a constant factor. Formally, there exists a constant $c>1$, such that when $\epsilon_0$ is the initial slush fund and $\epsilon_i$ is the slush fund after \PT wins for the $i$-th time, we have that $\epsilon_i>c \cdot \epsilon_{i-1}$, for all $i\geq 1$.
\end{enumerate}
Note that these invariants are satisfied initially.

We describe a \PO strategy. Consider a round in vertex $v$ in which \PO's real budget is $x'$, \PT's budget is $y'$ and the last time \PT won (or initially, in case \PT has not won yet) his slush fund was $\epsilon'$.  \PO's bid is $\Delta(v) \cdot x'+\delta_v \cdot \epsilon'$, where we define $\Delta(v)$ and $\delta_v$ below.
Upon winning, \PO moves to $v^{-}$, i.e., to the neighbor that minimizes $f(v)$, or, when $f(v)=0$, he moves to a vertex closer to $u_1$.
Upon winning, \PO pays $\Delta(v) \cdot x'$ from his real budget and $\delta_v \cdot \epsilon'$ from his slush fund.

For $v \in V\setminus \set{u_1, u_2}$, if $f(v) > 0$ and $f(v^-) <1$, let $\Delta(v) = \frac{f(v) - f(v^-)}{f(v)(1-f(v^-))}$ and otherwise, let $\Delta(v) = 0$. Note that the second invariant indicates that \PT cannot win more than a finite number of times, since whenever he wins, the slush fund increases by a constant and the slush fund cannot exceed $1$, because then it would be bigger than the total budget. This in turn shows that eventually \PO wins $n$ times in a row, which ensures that the play reaches $u_1$. 

We choose $\delta_v$, for $v \in V$, and show that our choice implies that \PO's strategy maintains the invariant above. Let $\Delta_{\min}$ be the smallest positive number such that $f(v)=\Delta_{\min}$ for some $v$, and $\Delta_{\min}=1$ if $f(v)=0$ for all $v \in V$.
Let $\delta_1$ be 1 and $\delta_i$ be such that $\sum_{j=1}^{i-1} \delta_j<\Delta_{\min}/2 \delta_i$, for all $i\in \{2,\dots, |V|\}$.
Also, let $\gamma$ be such that $\sum_{j=1}^{|V|} \delta_j<1/\gamma$.
For each state $v$ (such that $f(v)>0$), 
consider that \PO wins all bids and let $\dist(v)$ be the number of bids before the play ends up in $u_1$ starting from $v$. When $f(v)=0$, let $\dist(v)$ be the length of the shortest path from $v$ to $u_1$.
Then, $\delta_v=\gamma \delta_{i}$, for $i=|V|-\dist(v)$. 

In case \PO wins, his real budget becomes $x'-\Delta(v)x'$, and \PT's budget stays $y'$.
In that case, \PO's new real budget ratio becomes $\frac{(1-\Delta(v))x'}{(1-\Delta(v))x'+y'}=f(v^-)$, and the invariants are thus satisfied.
(His slush fund also decreases by $\delta_v\epsilon'$. We will not proof anything about the slush fund in this case, except noting that it stays positive).

In case \PT wins, \PO's real budget stays $x'$ and \PT's budget is at most $y'-\Delta(v)x'-\delta_v\epsilon'$.
By construction, we have that if \PT's budget became $y'-\Delta(v)x'$, then \PO's budget ratio becomes $\frac{x'}{x'+y'-\Delta(v)x'}=f(v^+)$, so even if \PT moves to $v^+$, \PT has paid $\delta_v\epsilon'$ too much for \PO's real budget ratio to be $f(v^+)$. 
Thus, the first invariant is satisfied.
Note that this also indicates that $f(v^+)\neq 1$, in this case, since otherwise \PO's budget ratio must be above 1, indicating that \PT's budget is negative. 
When $f(v^+)>0$, we can move $\delta_v\epsilon'f(v^+)/(1-f(v^+))\geq \delta_v\epsilon'\Delta_{\min}$ into the slush fund. When $f(v^+)=0$, the new slush fund is $\delta_v\epsilon'$.
Let $j$ be such that $\delta_j=\delta_v$.
By construction of $\delta_v$, we have that since the last time \PT won a bidding (or since the start if \PT never won a bid before), we have subtracted
at most $\epsilon' \sum_{i=j+1}^{|V|}\delta_i$ from the slush fond and now we have added $\delta_j\epsilon'\Delta_{\min}$. But $\delta_i$ was chosen such that $\sum_{i=j+1}^{|V|}\delta_i$ was below $\delta_v\Delta_{\min}/2$. Hence, we have added $\delta_v\epsilon'\Delta_{\min}$ to the previous content of $\epsilon'$. Because $\delta_v$ and $\Delta_{\min}$ are constants, we have thus increased the slush fund by a constant factor.
The invariants are thus satisfied in this case.
\end{proof}


\stam{
We start with reachability games. A slightly different variant of reachability games was studied in \cite{LLPSU99}. There, both players have a target and the game ends once one of the targets is reached. We call their variant a {\em double-reachability game}. 

 Thus, a poorman reachability game either reaches \PO's target or a vertex from which the target is not reachable. 

\begin{theorem}
\label{thm:reach}
\cite{LLPSU99} 
Threshold ratios exist in poorman reachability games.
\end{theorem}
\begin{proof}
Consider a poorman reachability game $\G = \zug{V, E, u_1}$ with $|V| = n$. Let $\G'$ be a double-reachability game that is obtained from $\G$ by setting the target $u_2$ of \PT to be the set of states from which $u_1$ is not reachable. Clearly, $\thresh(u_1) = 0 = 1-\thresh(u_2)$. It is shown in \cite{LLPSU99} that for every other vertex $v \in V$, we have $\thresh(v) = \frac{\thresh(v^+)}{1+\thresh(v^+)-\thresh(v^-)}$, where $v^+,v^- \in N(v)$ and for every $v' \in N(v)$, we have $\thresh(v^-) \leq \thresh(v') \leq \thresh(v^+)$. A winning strategy for \PO maintains the invariant that the ratio exceeds $\thresh(v)$ and guarantees that either \PO wins $n$ bidding in a row or \PT's ratio decreases by a constant. Thus, eventually \PT's budget runs out and \PO wins the game. See further details in App.~\ref{app:reach}.
\end{proof}
}

We continue to study poorman games with richer objectives. 

\begin{theorem}
\label{thm:parity}
Parity poorman games are linearly reducible to  reachability poorman games. Specifically, threshold ratios exist in parity poorman games.
\end{theorem}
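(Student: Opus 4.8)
The plan is to mirror the two-step reduction sketched in the introduction: first reduce a parity poorman game to the analysis of its strongly-connected components, and then reduce that analysis to a single reachability poorman game, invoking Theorem~\ref{thm:reach}. Throughout, the decisive structural feature I would exploit is that in poorman bidding each player's budget is non-increasing along a play (a winner pays the bank, a loser pays nothing), so both budgets converge.

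The heart of the argument is a zero-one law for strongly-connected games: I would prove that if $\G$ is strongly connected then one player wins from every vertex with every positive initial ratio, and that this player is the owner of the largest parity index $d$ occurring in $\G$ (i.e.\ \PO if $d$ is odd and \PT if $d$ is even). Taking $d$ odd without loss of generality and letting $T$ be the set of index-$d$ vertices, \PO wins exactly when he forces $T$ to be visited infinitely often, since the winner is determined by the largest index seen infinitely often. Because $\G$ is strongly connected, $T$ is reachable from everywhere, so by Theorem~\ref{thm:reach} the reachability threshold to $T$ is strictly below $1$ at every vertex. The strategy I would give \PO combines reachability campaigns toward $T$ with phases in which he deliberately bids low: losing a bid leaves his own budget intact while draining \PT's, and hence raises his ratio. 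The crux is to show that \PT cannot both keep the token away from $T$ forever and avoid spending: any strategy blocking \PO from reaching $T$ must win infinitely many biddings at bids bounded below, which is impossible because \PT's budget can only decrease. Formalizing this trade-off—balancing budget spent per campaign against budget recovered while losing, so that either $T$ is hit infinitely often or \PT's budget tends to $0$ (after which \PO dominates every bidding)—is the main obstacle, precisely because, unlike Richman bidding, the total budget shrinks and the reachability threshold to $T$ can be arbitrarily close to $1$.

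Granting the lemma, the reduction proceeds as follows. I would compute the SCC decomposition and, using the lemma, label each bottom SCC (BSCC) as \emph{good} (won by \PO) or \emph{bad} (won by \PT). I then build a reachability poorman game of linear size on essentially the same graph, in which \PO's target is the union of the good BSCCs and \PT's target the union of the bad ones, and claim $\thresh(v)$ in the parity game equals the reachability threshold of this game. If \PO's ratio exceeds that threshold, he forces the token into a good BSCC while retaining positive budget, then wins there by the lemma—any positive budget suffices, which is exactly why arriving with something left over is enough. For the converse I would use the dual reachability strategy of Theorem~\ref{thm:reach}: below the threshold, \PT forces the play into the set $S$ of vertices from which no good BSCC is reachable. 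Every BSCC inside $S$ is bad, so restricting the game to $S$ yields a strictly smaller parity game all of whose BSCCs are won by \PT, and an induction on the graph lets \PT win from $S$ with his remaining positive budget.

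The delicate point beyond the strongly-connected lemma is that a play need not ever enter a BSCC: its set of infinitely-visited vertices is some SCC, possibly non-bottom. I would handle this by the same induction, applying the strongly-connected lemma to whichever SCC the play gets trapped in and arranging the winner's strategy so that a trapped SCC with the wrong top parity cannot arise; this is what makes the threshold equality exact. Existence of threshold ratios in parity poorman games then follows immediately from their existence in reachability poorman games (Theorem~\ref{thm:reach}), completing the reduction.
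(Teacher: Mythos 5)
Your overall reduction---a zero-one law for strongly-connected components followed by a double-reachability game whose targets are the good and bad BSCCs---is exactly the paper's. The gap is in the zero-one law itself, and it is the step you flag as ``the main obstacle'': the recovery mechanism you propose does not close it. You establish only that the reachability threshold to the set $T$ of maximal-index vertices is strictly below $1$, and then try to compensate for the budget spent on each campaign by having \PO bid low so that \PT is ``drained.'' But in poorman bidding \PT never has to bid more than marginally above \PO; if \PO's bids in a recovery phase are small (and they must be summable, or \PO exhausts his own budget), \PT can win every one of those biddings at a total cost that converges, keep the token away from $T$ forever, and retain a positive budget. So the claim that any blocking strategy ``must win infinitely many biddings at bids bounded below'' is false, and the trade-off you hope to formalize cannot be.

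The missing idea is that in a strongly-connected game the reachability threshold to $v_{Max}$ is not merely below $1$ but exactly $0$. This falls out of Theorem~\ref{thm:reach}: the set $S$ of vertices with no path to the target is empty, so the threshold function $f$ is pinned only by $f(u_1)=0$ and the recurrence, whose unique solution is identically $0$; the strategy constructed there (bidding only from the slush fund when $f(v)=0$) then reaches the target from any vertex with any positive budget $\epsilon$. Once you have this, no recovery phases are needed: \PO splits his budget into $\epsilon_i=\epsilon\cdot 2^{-i}$ and, after the $i$-th visit to $v_{Max}$, reruns the reachability strategy with budget $\epsilon_{i+1}$, thereby visiting $v_{Max}$ infinitely often from any positive initial ratio. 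The rest of your argument (labelling BSCCs, the double-reachability game, and using the dual strategy of Theorem~\ref{thm:reach} to force the play into a bad BSCC, which also disposes of your worry about plays trapped in non-bottom SCCs) then goes through essentially as in the paper.
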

\begin{proof}
The crux of the proof is to show that in a bottom strongly-connected component (BSCC, for short) of $\G$, one of the players wins with every initial budget. Thus, the threshold ratios for vertices in BSCCs are either $0$ or $1$. For the rest of the vertices, we construct a reachability game in which a player's goal is to reach a BSCC that is ``winning'' for him. 

Formally, consider a strongly-connected  parity poorman game $\G = \zug{V, E, p}$. We claim that there is $\alpha \in \set{0,1}$ such that for every $v \in V$, we have $\thresh(v) = \alpha$, i.e., when $\alpha=0$, \PO wins with any positive initial budget, and similarly for $\alpha=1$. Moreover, deciding which is the case is easy: let $v_{Max} \in V$ be the vertex with maximal parity index, then $\alpha = 0$ iff $p(v_{Max})$ is odd. 

Suppose $p(v_{Max})$ is odd and the proof for an even $p(v_{Max})$ is dual. We prove in two steps. First, following the proof of  Theorem~\ref{thm:reach}, we have that when \PO's initial budget is $\epsilon >0$, he can draw the game to $v_{Max}$ once.
Second, we show that \PO can reach $v_{Max}$ infinitely often when his initial budget is $\epsilon >0$. \PO splits his budget into parts $\epsilon_1,\epsilon_2,\ldots$, where $\epsilon_i = \epsilon \cdot 2^{-i}$, for $i \geq 1$, thus $\sum_{i \geq 1} \epsilon_i = \epsilon$. Then, for $i \geq 0$, following the $i$-th visit to $v_{Max}$, he plays the strategy necessary to draw the game to $v_{Max}$ with initial budget $\epsilon_{i+1}$. 

We turn to show the reduction from  parity poorman games to double-reachability poorman games. Consider a parity poorman game $\G = \zug{V, E, p}$. Let $S \subseteq V$ be a BSCC in $\G$. We call $S$ {\em winning} for \PO if the vertex $v_{Max}$ with highest parity index in $S$ has odd $p(v_{Max})$. Dually, we call $S$ winning for \PT if $p(v_{Max})$ is even. Indeed, the claim above implies that for every $S$ that is winning for \PO and $v \in S$, we have $\thresh(v) = 0$, and dually for \PT. Let $\G'$ be a double-reachability poorman game that is obtained from $\G$ by setting the BSCCs that are winning for \PO in $\G$ to be his target in $\G'$ and the BSCCs that are winning for \PT in $\G$ to be his target in $\G'$. Similar to the proof of Theorem~\ref{thm:reach}, we have that $\thresh(v)$ in $\G$ equals $\thresh(v)$ in $\G'$, and we are done.
\end{proof}

\section{Mean-Payoff Poorman Games}
\label{sec:MP}
This section consists of our most technically challenging contribution. We construct optimal strategies for the players in  mean-payoff poorman games. The crux of the solution regards strongly-connected mean-payoff games, which we develop in the first three sub-sections. 

Consider a strongly-connected game $\G$ and an initial ratio $r \in [0,1]$. We claim that the value in $\G$ w.r.t.~$r$ does not depend on the initial vertex. For a vertex $v$ in $\G$, recall that $\MP^r(\G, v)$ is the maximal payoff \Max can guarantee when his initial ratio in $v$ is $r+\epsilon$, for every $\epsilon >0$. We claim that for every vertex $u \neq v$ in $\G$, we have $\MP^r(\G, u) = \MP^r(\G, v)$. Indeed, as in Theorem~\ref{thm:parity}, \Max can play as if his initial ratio is $\epsilon/2$ and draw the game from $u$ to $v$, and from there play using an initial ratio of $r+\epsilon/2$. Since the energy that is accumulated until reaching $v$ is constant, it does not affect the payoff of the infinite play starting from $v$.

We write $\MP^r(\G)$ to denote the value of $\G$ w.r.t.\ $r$. We show the equivalence with random-turn games: the value $\MP^r(\G)$ equals the value $\MP(\RT^r(\G))$ of the random-turn mean-payoff game $\RT^r(\G)$ in which \Max chooses the next move with probability $r$ and \Min with probability $1-r$.


\stam{
Consider a strongly-connected poorman mean-payoff game $\G = \zug{V, E, w}$ and a ratio $r \in [0,1]$. It follows from Theorem~\ref{thm:parity} that for every $u,v \in V$, we have $\MP^r(\G, v) = \MP^r(\G, u)$. Indeed, \Max can draw the game from $v$ to $u$ with any positive initial ratio. Thus, we denote by $\MP^r(\G)$ the {\em value} of $\G$, which is the value of all its vertices. We consider the case where $\MP^r(\G) = 0$ and show that \Max can guarantee a non-negative mean-payoff value. 

We describe the intuition of our proof technique. Consider a play $\pi$ of a bidding mean-payoff game. We keep track of the changes in energy, which, recall, is the sum of weights throughout the play, and the changes in the ratio of the two players' budgets. We think of these changes as representing two walks on two sequences; one corresponding to energy and one to the ratio of the total budget that \Max has. Suppose a finite prefix of $\pi$ ends on a vertex with weight $c$. Then, assuming the energy token is placed on $\ell$, then the next position in the energy sequence is $\ell+c$. The walk on the budget sequence takes the viewpoint of \Max. There are functions $f_1$ and $f_2$ such that following a \Max bid of $b$, upon winning, the walk proceeds $f_1(b)$ steps down, representing a decrease in the ratio of his budget, and upon losing, the walk proceeds $f_2(b)$ steps up. The proof has two ingredients: (1) we tie between the two walks such that a change in one walk induces a proportional change in the other walk or in other words, we find an invariant that connects budget and energy, and (2) we show that the walk on the budget sequence is bounded, thus the energy is also bounded, which in turn ensures that the payoff is non-negative.
}

\stam{
In the following sections we adapt this proof technique to poorman mean-payoff games. The budget sequence we use is involved. In order to develop it, we first re-visit the solution of \cite{LLPSU99} for the poorman version of the simple game depicted in Fig.~\ref{fig:loops} and adapt it to our framework. In Section~\ref{sec:walking-on-a-seq} we generalize the sequence so that it can be used for general strongly-connected games with arbitrary weights. The bids in general strongly-connected games cannot be uniform as in the example above, and must change between vertices. In Section~\ref{sec:potential}, we develop the tools to handle such games. We combine these techniques in Section~\ref{sec:combine} to construct an optimal \Max strategy in strongly-connected games. Finally, in Section~\ref{sec:general-MP}, we reason about general graph structures.
}

\stam{
As a simple example, we consider a game with Richman rules on the graph that is depicted in Fig.~\ref{fig:loops}. We revisit \cite{LLPSU99}'s proof that \Min can always keep the energy bounded. Formally, consider an initial energy $k^I \in \Nat$, suppose \Min's initial budget is $B_{\Min} \in (0,1]$, and \Max's initial budget is $1-B_{\Min}$. Let $N \in \Nat$ be such that $B_{\Min} > k^I/N$. We show a \Min strategy that keeps the energy below $N$. The energy sequence is $\Z$ and the walk starts at $k^I$. The budget sequence is $\set{\frac{k}{N}: k \in \Nat}$ and the walk on it starts at $k^I/N$. \Min always bids $\frac{1}{N}$. For the first ingredient, we show a connection between the sequences: whenever \Min wins a bidding, both sequences take one step down, and whenever he loses, both sequences take a step up. Indeed, a win in a bidding implies that the energy decreases by $1$ as well as a decrease of \Min's budget by $\frac{1}{N}$. When \Max wins a bidding the energy increases by $1$. Since \Min bids $\frac{1}{N}$, \Max's bid is at least $\frac{1}{N}$, which by Richman rules, is added to \Min's budget. Stated as an invariant, whenever the energy level is $k \in \Nat$, \Min's budget is greater than $k/N$. For the second ingredient, we show that the walk on the budget sequence is bounded: since the sum of budgets in Richman games are fixed to $1$, the walk cannot reach $N/N$, thus the energy is also bounded by $N$.

In the following sections we adapt this proof technique to poorman games. The budget sequence we use is more involved than the one used above. In order to develop it, we first re-visit the solution of \cite{LLPSU99} for the poorman version of the simple game depicted in Fig.~\ref{fig:loops} and adapt it to our framework. In Section~\ref{sec:walking-on-a-seq} we generalize the sequence they use so that it can be used for general strongly-connected games with arbitrary weights. The bids in general strongly-connected games cannot be uniform as in the example above, and must change between vertices. In Section~\ref{sec:potential}, we develop the tools to handle such games. We combine these techniques in Section~\ref{sec:combine} to construct an optimal \Max strategy in strongly-connected games. Finally, in Section~\ref{sec:general-MP}, we reason about general graph structures.
}

\subsection{Warm up: solving a simple game}
\label{sec:warm-up}
In this section we solve a simple game through which we demonstrate the ideas of the general case. Recall that in an energy game, \Min wins a finite play if the sum of weights it traverses, a.k.a.\ the energy, is $0$ and \Max wins an infinite play in which the energy stays positive throughout the play. 

\begin{lemma}
\label{lem:loops}
\cite{LLPSU99} In the energy game that is depicted in Fig.~\ref{fig:loops}, if the initial energy is $k \in \Nat$, then \Max wins iff his initial ratio exceeds $\frac{k+2}{2k+2}$. 
\end{lemma}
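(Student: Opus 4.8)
The plan is to treat the energy game on Fig.~\ref{fig:loops} as an (infinite-state) reachability game played on the energy counter $\set{0,1,2,\ldots}$: in each round the winner of the bid moves the counter, \Max upward by one and \Min downward by one; \Min wins the instant the counter reaches $0$, and \Max wins if the counter stays positive forever, starting from value $k$. I would define the candidate \Max-threshold $T_k=\frac{k+2}{2k+2}$, equivalently \Min's threshold $R_k := 1-T_k = \frac{k}{2(k+1)}$, and first check that these satisfy the poorman reachability recurrence of Theorem~\ref{thm:reach} read from \Min's viewpoint. At level $j$ the two neighbours are $j-1$ and $j+1$, and since $R$ is increasing the minimizing/maximizing neighbours are $j-1$ and $j+1$ respectively, so the recurrence reads
$$R_k=\frac{R_{k+1}}{1-R_{k-1}+R_{k+1}},\qquad R_0=0.$$
Verifying this is a routine substitution. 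One subtlety is worth flagging already: this is a second-order recurrence with only the single boundary value $R_0=0$, hence a one-parameter family of solutions; the intended one is singled out by requiring all values to lie in $[0,1]$ (a ``boundary condition at infinity''), which forces $R_1=\tfrac14$ and pins down the closed form. Because the arena is infinite, I cannot simply invoke Theorem~\ref{thm:reach}, so both players' strategies must be checked by hand.

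For the \Max direction (safety), suppose \Max's ratio exceeds $T_k$. I would have \Max maintain the invariant that whenever the counter is at level $j$ his budget ratio is strictly above $T_j$, using the slush-fund bookkeeping from the proof of Theorem~\ref{thm:reach}: the ``real'' budget is kept so that its ratio is exactly $T_j$ at level $j$, and the strict surplus is carried in the slush fund. The bid at level $j$ is chosen, via the recurrence, so that if \Min wins the counter drops to $j-1$ and the ratio is still at least $T_{j-1}$, while if \Max wins the counter rises to $j+1$ and the ratio automatically stays above $T_{j+1}$ (easier, since $T_{j+1}<T_j$). The crucial consequence is that $T_0=1$ while any reachable ratio with \Min still solvent is strictly below $1$; thus the invariant makes level $0$ unreachable, so the counter stays positive forever and \Max wins. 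No termination argument is needed here, only that the strict surplus survives all (infinitely many) rounds, which the slush fund guarantees; and if \Min ever exhausts his budget, \Max trivially wins every remaining bid.

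For the \Min direction (reachability), suppose \Max's ratio is below $T_k$, i.e.\ \Min's ratio exceeds $R_k$. Here I would run \Min's reachability strategy from Theorem~\ref{thm:reach} essentially verbatim on the counter: \Min keeps his real-budget ratio at $R_j$ at level $j$ and maintains a slush fund that grows by a constant factor every time \Max wins a bid. Since the slush fund can never exceed the total budget, \Max can win only finitely many bids; after \Max's final win \Min wins every subsequent bid and drives the counter down to $0$ in finitely many steps, so \Min wins.

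The main obstacle is precisely that the arena is infinite, so the finite-game machinery is not directly available and shows up in two places. First, pinning down the closed form requires selecting the bounded solution of the recurrence rather than reading it off a finite fixed point. Second, the strategies must be argued correct over an infinite horizon: for \Min, that only finitely many \Max-wins can occur and therefore the counter is forced to $0$ (the slush-fund growth argument), and for \Max, that the strict budget surplus keeping level $0$ out of reach is never exhausted. The slush-fund device inherited from the proof of Theorem~\ref{thm:reach} is exactly what makes both of these robust, which is why I expect the bookkeeping around it, rather than the recurrence itself, to be the delicate part.
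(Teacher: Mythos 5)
Your proof is correct and, for the direction the paper actually proves, it is essentially the paper's own argument in different clothing: the invariant ``at energy $j$, \Max's ratio exceeds $T_j=\frac{j+2}{2j+2}$'', the bid of $\frac{1}{(j+1)^2}$ of the total budget plus a vanishing surplus, and the observation that at energy $1$ this bid covers \Min's entire budget are exactly the content of the paper's squares-and-triangles presentation, with your recurrence check $R_j=R_{j+1}/(1-R_{j-1}+R_{j+1})$ playing the role of the identity $\frac{T_{k+1}-1}{(k+1)^2-1}=\frac{T_{k+2}}{(k+2)^2}$. The one substantive difference is that the paper only proves the ``if'' direction and cites \cite{LLPSU99} for the converse, whereas you sketch the \Min direction too; that sketch is sound in outline, but note that the slush-fund bookkeeping in the proof of Theorem~\ref{thm:reach} indexes the constants $\delta_i$ by $|V|-\dist(v)$ over a finite vertex set, so on the unbounded energy counter you must replace it by an infinite summable sequence with $\sum_{i>j}\delta_i<\delta_j\Delta_{\min}/2$ (a small enough geometric sequence works, since $\Delta_{\min}=R_1=\frac14>0$ here) rather than running it ``verbatim''.
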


The first implication in Lemma~\ref{lem:loops} is the important one for us. It shows that \Max can guarantee a payoff of $0$ with an initial budget that exceeds $0.5$. Indeed, given an initial ratio of $0.5+\epsilon$, \Max plays as if the initial energy is $k \in \Nat$ such that $\frac{k+2}{2k+2} < 0.5+\epsilon$. He thus keeps the energy bounded from below by $-k$, which implies that the payoff is non-negative.

We describe an alternative proof for the first implication in Lemma~\ref{lem:loops} whose ideas we will later generalize. We need several definitions. For $k \in \Nat$, let $S_k$ be the square of area $k^2$. In Fig.~\ref{fig:square}, we depict $S_5$. We split $S_k$ into unit-area boxes such that each of its sides contains $k$ boxes. A diagonal in $S_k$ splits it into a smaller black triangle and a larger white one. For $k \in \Nat$, we respectively denote by $t_k$ and $T_k$ the areas of the smaller black triangle and the larger white triangle of $S_k$. For example, we have $t_5 =10$ and $T_5 = 15$, and in general $t_k = \frac{k(k-1)}{2}$ and $T_k = \frac{k(k+1)}{2}$. 

\begin{figure}[ht]
\begin{minipage}{0.4\linewidth}
\center
\includegraphics[height=1.5cm]{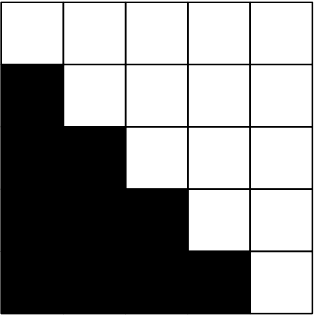}
\end{minipage}
\begin{minipage}{0.4\linewidth}
\center
\begin{tabular}{|c|c|c|c|c|c|}
\hline
& $2$ & $3$ & $4$ & $5$ & $6$ \\
\hline
 \hline
 $t_k$ & $1$ & $3$ & $6$ & $10$ & $15$ \\
 \hline
 $T_k$ & $3$ & $6$ & $10$ & $15$ & $21$ \\
 \hline
 \end{tabular}
$\ \ldots$
 \end{minipage}
\caption{The square $S_5$ with area $25$ and the sizes of some triangles.}
\label{fig:square}
\end{figure}

Suppose the game starts with energy $\kappa \in \Nat$. We show that \Max wins when his ratio exceeds $\frac{\kappa+2}{2\kappa+2}$, which equals $\frac{T_{\kappa+1}}{(\kappa+1)^2}$. For ease of presentation, it is convenient to assume that the players' ratios add up to $1+\epsilon_0$, \Max's initial ratio is $\frac{T_{\kappa+1}}{(\kappa+1)^2} + \epsilon_0$, and \Min's initial ratio is $\frac{t_{\kappa+1}}{(\kappa+1)^2}$. For $j \geq 0$, we think of $\epsilon_j$ as \Max's slush fund in the $j$-th round of the game, though its role here is somewhat less significant than in Theorem~\ref{thm:reach}. Consider a play $\pi$. We think of changes in energy throughout $\pi$ and changes in budget ratio as representing two walks on two sequences. The {\em energy sequence} is $\Nat$ and the {\em budget sequence} is $\set{t_{k}/S_{k}:k \in \Nat}$, with the natural order in the two sets. We show a strategy for \Max that maintains the invariant that whenever the energy is $k \in \Nat$, then \Max's ratio is greater than $T_{k+1}/(k+1)^2$. That is, whenever \Max wins a bidding, both sequences take a ``step up'' and when he loses, both sequences take a ``step down''.

We describe \Max's strategy. Upon winning a bidding, \Max proceeds to $v_1$, thus the energy increases by one. We assume WLog. that upon winning, \Min proceeds to $v_2$, thus the energy decreases by one. The challenge is to find the right bids. Suppose the energy level is $k$ at the $j$-th round. Thus, \Max and \Min's ratio are respectively $T_{k+1}/(k+1)^2 + \epsilon_j$ and $t_{k+1}/(k+1)^2$. In other words, \Min owns $t_{k+1}$ boxes and \Max owns a bit more than $T_{k+1}$ boxes. \Max's bid consists of two parts. \Max bids $1/(k+1)^2 + \epsilon_j/2$, or in other words,  a single box and half of his slush fund. 
We first show how the strategy maintains the invariant and then how it guarantees that an energy of $0$ is never reached. Suppose first that \Max wins the bidding. The total number of boxes decreases by one to $(k+1)^2-1$, his slush fund is cut by half, and \Min's budget is unchanged. Thus, \Max's ratio of the budget is more than $(T_{k+1}-1)/\big((k+1)^2-1\big)$, which equals $T_{k+2}/(k+2)^2$. For example, let $k=4$ and \Max's ratio exceeds  $\frac{T_5}{t_5 + T_5}$. Following a bidding win the energy increases to $k=5$ and \Max's ratio is more than $\frac{T_5-1}{t_5 + T_5 -1} = \frac{15-1}{25-1} = \frac{21}{36}=\frac{T_6}{t_6+T_6}$. In other words, we take a step up in both sequences. The other case is when \Min wins the bidding, the energy decreases by $1$, and we show that the budget sequences takes a step down. Since \Max bids more than one box, and \Min overbids, \Min bids at least one box. \Max's new ratio is more than $T_{k+1}/((k+1)^2-1) = T_k/k^2$, thus dually, both sequences take a step down. For example, again let $k=4$ and \Max's ratio exceeds $\frac{T_5}{t_5 + T_5}$. Upon losing a bidding, the energy decreases to $k=3$ and \Max's ratio is $\frac{15}{25-1} = \frac{10}{16}=\frac{T_4}{t_4+T_4}$. 

It is left to show that the energy never reaches $0$, thus the walk on the budget sequence never reaches the first element. Suppose the energy is $k=1$ in the $j$-th round, thus according to the invariant, \Max's ratio is $\frac{3}{4} + \epsilon_j$ and \Min's ratio is $\frac{1}{4}$. Recall that \Max bids $\frac{1}{(k+1)^2} + \epsilon_j/2$ at energy $k$. In particular, he bids $\frac{1}{4}+\epsilon_j/2$ at energy $1$, which exceeds \Min's budget, thus \Max necessarily wins the bidding, implying that the energy increases.

\subsection{The potential and strength of vertices}
\label{sec:potential}
In an arbitrary strongly-connected game the bids in the different vertices cannot be the same. In this section we develop a technique to determine the ``importance'' of a node $v$, which we call its {\em strength} and measures how high the bid should be in $v$ compared with the other vertices.

Consider a strongly-connected game $\G = \zug{V, E, w}$ and $r \in [0,1]$. Recall that $\RT^r(\G)$ is a random-turn game in which \Max chooses the next move with probability $r$ and \Min with probability $1-r$. A {\em positional strategy} is a strategy that always chooses the same action (edge) in a vertex. It is well known that there exist optimal positional strategies for both players in stochastic mean-payoff games. 

Consider two optimal positional strategies $f$ and $g$ in $\RT^r(\G)$, for \Min and \Max, respectively. For a vertex $v \in V$, let $v^-,v^+ \in V$ be such that $v^- = f(v_\Min)$ and $v^+ = g(v_\Max)$. The {\em potential} of $v$, denoted $\Pot^r(v)$, is a known concept in probabilistic models and its existence is guaranteed \cite{Put05}. We use the potential to define the {\em strength} of $v$, denoted $\St^r(v)$, which intuitively measures how much the potentials of the neighbors of $v$ differ. We assume w.l.o.g. that $\MP(\RT^r(\G)) = 0$ as otherwise we can decrease all weights by this value. Let $\nu \in \Q$ be such that $r = \frac{\nu}{\nu+1}$. The potential and strengths of $v$ are functions that satisfy the following:
\[
\Pot^r(v) = \frac{\nu \cdot \Pot^r(v^+) + \Pot^r(v^-)}{1+\nu}  + w(v) \text{ and }
\St^r(v) = \frac{\Pot^r(v^+) - \Pot^r(v^-)}{1+\nu}\]
There are optimal strategies for which $\Pot^r(v^-) \leq \Pot^r(v') \leq \Pot^r(v^+)$, for every $v' \in N(v)$, which can be found for example using the strategy iteration algorithm. 

Consider a finite path $\pi = v_1,\ldots, v_n$ in $\G$. We intuitively think of $\pi$ as a play, where for every $1 \leq i < n$, the bid of \Max in $v_i$ is $\St^r(v_i)$ and he moves to $v_i^+$ upon winning. Thus, if $v_{i+1} = v_i^+$, we say that \Max won in $v_i$, and if $v_{i+1} \neq v_i^+$, we say that \Max lost in $v_i$. Let $W(\pi)$ and $L(\pi)$ respectively be the indices in which \Max wins and loses in $\pi$. We call \Max wins {\em investments} and \Max loses {\em gains}, where intuitively he {\em invests} in increasing the energy and {\em gains} a higher ratio of the budget whenever the energy decreases. Let $G(\pi)$ and $I(\pi)$ be the sum of gains and investments in $\pi$, respectively, thus $G(\pi) = \sum_{i \in L(\pi)} \St^r(v_i)$ and $I(\pi) =  \sum_{i \in W(\pi)} \St^r(v_i)$. Recall that the energy of $\pi$ is $E(\pi) = \sum_{1 \leq i <n} w(v_i)$. The following lemma connects the strength, potential, and energy.

\begin{lemma}
\label{lem:magic}
Consider a strongly-connected game $\G$, a ratio $r = \frac{\nu}{1+\nu} \in (0,1)$ such that $\MP(\RT^r(\G))= 0$, and a finite path $\pi$ in $\G$ from $v$ to $u$. Then, $\Pot^r(v) - \Pot^r(u) \leq E(\pi) +  \nu \cdot G(\pi) - I(\pi)$. 
\end{lemma}
\begin{proof}
We prove by induction on the length of $\pi$. For $n=1$, the claim is trivial since both sides of the equation are $0$. Suppose the claim is true for paths of length $n$ and we prove for paths of length $n+1$. Let $\pi'$ be the prefix of $\pi$ starting from the second vertex. We distinguish between two cases. In the first case, \Max wins in $v$, thus $\pi'$ starts from $v^+$. Note that since \Max wins the first bidding, we have $G(\pi) = G(\pi')$ and $I(\pi) = \St^r(v) + I(\pi')$. Also, we have $E(\pi) = E(\pi') + w(v)$. Combining these with the induction hypothesis, we have 
\[
E(\pi) + \nu \cdot G(\pi) - I(\pi) = -\St^r(v) + w(v) + E(\pi') + \nu\cdot G(\pi') - I(\pi') \geq -\St^r(v) + w(v) + \Pot^r(v^+) - \Pot^r(u) =
\]
\[
= \frac{\Pot^r(v^-) - \Pot^r(v^+) + (1+\nu) \cdot \Pot^r(v^+)}{1+\nu} + w(v) -  \Pot^r(u) = \Pot^r(v) - \Pot^r(u)
\]

In the second case, \Max loses the first bidding, thus $\pi'$ starts from some $v'$ with $\Pot^r(v') \geq \Pot^r(v^-)$, $I(\pi) = I(\pi')$, and $G(\pi) = \G(\pi') + \St^r(v)$. We combine with the induction hypothesis to obtain the following
\[
E(\pi) + \nu \cdot G(\pi) - I(\pi) = \nu\cdot \St^r(v) + w(v) + E(\pi') + \nu\cdot G(\pi') - I(\pi') \geq \St^r(v) + w(v) + \Pot^r(v') - \Pot^r(u) \geq
\]
\[
\geq \St^r(v) + w(v) + \Pot^r(v^-) - \Pot^r(u)= \frac{\nu \cdot \Pot^r(v^+) - \nu \cdot\Pot^r(v^-) + (1+\nu) \cdot\Pot^r(v^-)}{1+\nu} + w(v) -  \Pot^r(u) =\]
\[= \Pot^r(v) - \Pot^r(u)
\]
\end{proof}

\begin{example}
Consider the game depicted in Fig.~\ref{fig:ex-2}. \Max always proceeds left and \Min always proceeds right, so, for example, we have $v^+_2 = v_1$ and $v^-_2 = v_3$. It is not hard to verify that $\MP(\RT^{2/3}(\G))=0$ by finding the stationary distribution of $\RT^{2/3}(\G)$. We have $P^\frac{2}{3}(v_1) = 6$, $P^\frac{2}{3}(v_2) = 3$, $P^\frac{2}{3}(v_3) = 0$, and $P^\frac{2}{3}(v_4) = -3$. Thus, the strengths are $\St^\frac{2}{3}(v_1) = 1$, $\St^\frac{2}{3}(v_2) = 2$, $\St^\frac{2}{3}(v_3) = 2$, and $\St^\frac{2}{3}(v_4) = 1$.  Consider the path $\pi = v_0, v_1, v_2, v_2, v_1, v_0$ in which \Max wins the first three bids and loses the last two, thus $G(\pi) = 1+2$ and $I(\pi) = 2+2+1=5$. We have $E(\pi) = -1$ since the last vertex does not contribute to the energy. The left-hand side of the expression in Lemma~\ref{lem:magic} is $0$, and the right-hand side is $-1 + 2\cdot3 -5=0$. \hfill\qed
\end{example}

\subsection{Defining a richer budget sequence}
\label{sec:walking-on-a-seq}
In this section we generalize the ideas from Section~\ref{sec:warm-up} so that we can treat any strongly-connected graph and any initial ratio. Let $r = \frac{\nu}{1+\nu}$. For the remainder of this section we fix \Min's budget to $1$ and let \Max's budget be $\nu$. We find two sequences $\set{\nu_x}_{x >0}$ and $\set{\beta_x}_{x>0}$, which we refer to as the {\em budget sequence} with properties on which we elaborate below. \Max's bid depends on the position in the budget sequence as well as the strength of the vertex. We find it more convenient to normalize the strength.
\begin{definition}
\label{def:normalized-strength}
{\bf (Normalized strength).}
Let $S = \max_v |\St^r(v)|$. The {\em normalized strength} of a vertex $v \in V$ is $\nSt(v) = \St^r(v)/S$.
\end{definition}

Formally, when the token is placed on a vertex $v \in V$ and the position of the walk is $x$, then \Max bids $\beta_x \cdot \nSt(v)$. Note that $\nSt(v) \in [0,1]$, for all $v \in V$.

We describe the intuition of the construction. We think of \Max's strategy as maintaining a position $x \in \Real_{> 0}$ on a walk, where his bidding strategy maintains the invariant that his ratio exceeds $\nu_x$. For example, in Section~\ref{sec:warm-up}, the vertices have the same importance, thus their strength is $1$. For $k \in \Nat$, we have $\nu_k = T_{k+1}/(k+1)^2$ and $\beta_k = 1/(k+1)^2$, and whenever the position is $x=k$, \Max's ratio exceeds $\nu_k$. We distinguish between two cases. Suppose first that $\nu \geq 1$. If \Max wins the bidding in $v$, then the next position of the walk is $x + \nSt(v)$, and if \Min wins the bidding, the next position is $x - \nSt(v) \cdot \nu$. When $\nu< 1$, the next position when \Max wins is $x+\nSt(v) \cdot \nu^{-1}$, and when he loses, the next position is $x-\nSt(v)$. There are two complications when comparing with the proof in Section~\ref{sec:warm-up}. 
First, while in Section~\ref{sec:warm-up}, we always take one step when winning a bidding, here the number of steps taken at a vertex $v$ depends on the importance of $v$. Unlike that proof, a step of $s \in \Q$ does not necessarily correspond to a change of $s$ in the energy. Lemma~\ref{lem:magic} guarantees that steps in the walk even out with changes in energy at the end of cycles, which suffices for our purposes. Second, that proof addresses the case of $r=1/2$ and here we consider general ratios. When \Max's initial ratio is $r$, winning a bidding is $r$-times more costly than winning a bidding for \Min. This is illustrated in Example~\ref{ex:loops}, where when \Min has a budget of $2 + \epsilon$ and \Max has a budget of $1$, \Min pushes a \Max winning bid of $b$ on the queue twice. 

 We define the following budget sequence.

\begin{definition}
\label{def:budget-seq}
Let $r = \frac{\nu}{1+\nu} > 0$ be an initial ratio. For $x>0$, we define $\nu_x = \nu(1+\frac{2}{x})$ and $\beta_x = \frac{2\cdot \min(1,\nu)}{x(x+1)}$.
\end{definition}

The most important property of the sequences is maintaining the invariant between $x$ and the ratio $\nu_x$. Recall that \Max's budget exceeds $\nu_x$ at position $x$ and \Min's budget is $1$. Suppose \Max's bid is $b$. Then, upon winning, \Max's new budget is $\nu_x - b$, and upon losing and re-normalizing \Min's budget to $1$, \Max's new budget is at least $\nu_x/(1-b)$. The following lemma shows that the invariant is maintained in both cases.

\begin{lemma}
\label{lem:invariant}
For any $0<x,\nu$ and $n\in [0,1]$, if $x(x+1)> 2\cdot n\cdot \min(1,\nu)$, we have  

\[
\frac{\nu(1+\frac{2}{x})}{1-\frac{2\cdot n\cdot \min(1,\nu)}{x(x+1)}}\geq \nu\left(1+\frac{2}{x-n\cdot \min(1,\nu)}\right) \text{ and } 
\nu(1+\frac{2}{x})-\frac{2\cdot n\cdot \min(1,\nu)}{x(x+1)}\geq \nu\left(1+\frac{2}{x+n\cdot \min(1,\nu^{-1})}\right)
\]
\end{lemma}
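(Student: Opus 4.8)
The plan is to prove both inequalities by the same two-step recipe: first divide through by $\nu > 0$ to eliminate it, then reduce each resulting rational inequality to a manifestly nonnegative quantity of the form $t(1-t)$ for an appropriate $t \in [0,1]$. The observation that makes the two halves symmetric is the identity $\frac{\min(1,\nu)}{\nu} = \min(1,\nu^{-1})$: when I divide the second inequality by $\nu$, the coefficient $\frac{2n\min(1,\nu)}{\nu}$ becomes $2n\min(1,\nu^{-1})$, so the first inequality is naturally governed by $a := n\cdot\min(1,\nu)$ and the second by $a' := n\cdot\min(1,\nu^{-1})$. Since $n \in [0,1]$ and both $\min(1,\nu), \min(1,\nu^{-1}) \le 1$, I get $a, a' \in [0,1]$, which is exactly what the final step needs.

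For the first inequality, after dividing by $\nu$ and writing each side over a common denominator, it reads $\frac{(x+2)(x+1)}{x(x+1)-2a} \ge \frac{x-a+2}{x-a}$. Before cross-multiplying I must check that both denominators are positive: $x(x+1)-2a > 0$ is precisely the hypothesis $x(x+1) > 2a$, and $x - a > 0$ follows from that hypothesis together with $a \le 1$ (indeed, if $x \le a$ then $x(x+1) \le a(a+1) \le 2a$, contradicting the hypothesis). Cross-multiplying and expanding both cubics, the difference between the two sides collapses to $2a(1-a)$, which is $\ge 0$ because $a \in [0,1]$.

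The second inequality is easier and needs no hypothesis beyond $x > 0$. After dividing by $\nu$, subtracting $1$, and dividing by $2$, it becomes $\frac{x+1-a'}{x(x+1)} \ge \frac{1}{x+a'}$. Both denominators are positive, so cross-multiplication is immediate, and the difference of the two sides reduces to $a'(1-a') \ge 0$, again by $a' \in [0,1]$.

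The bulk of the argument is routine polynomial algebra, so the only genuine subtleties I would check most carefully are the two sign conditions that license cross-multiplication in the first inequality, namely $x(x+1)-2a>0$ and $x-a>0$; the latter is the one place where the hypothesis $x(x+1)>2a$ is used in an essential way, in combination with $a \le 1$. Everything else follows mechanically by expanding the products and reading off the remainder $t(1-t)$.
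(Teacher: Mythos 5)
Your proof is correct and follows essentially the same route as the paper's: check the sign conditions that license cross-multiplication (in particular deriving $x > n\min(1,\nu)$ from the hypothesis), clear denominators, and observe that the polynomial difference collapses to $2a(1-a)$ resp.\ $a'(1-a')$ with $a,a'\in[0,1]$. The only cosmetic difference is that you divide out $\nu$ up front via $\min(1,\nu)/\nu=\min(1,\nu^{-1})$, whereas the paper carries $\nu$ through the computation; the substance is identical.
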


\begin{proof}
We start with the first claim and argue that $x(x+1)> 2\cdot n\cdot \min(1,\nu)$ implies that $x> n\min(1,\nu)$. If $x>1$, the latter follows directly from our assumptions on $n$ (and that $\min(1,\nu)\leq 1$). On the other hand, if $0<x\leq 1$, the former can be written as 
$xc>n\cdot \min(1,\nu)$, for $c=\frac{x+1}{2}\leq 1$, which in particular, implies that $x>n\cdot \min(1,\nu)$.

We have that \[
\frac{\nu(1+\frac{2}{x})}{1-\frac{2\cdot n\cdot \min(1,\nu)}{x(x+1)}}=\nu\cdot \frac{\frac{x+2}{x}}{\frac{x(x+1)-2\cdot n\cdot \min(1,\nu)}{x(x+1)}}=\nu\cdot \frac{(x+2)(x+1)}{x(x+1)-2\cdot n\cdot \min(1,\nu)} 
\]
(we have that the denominators are $>0$ since $x(x+1)> 2\cdot n\cdot \min(1,\nu)$).

Also, \[
\nu\left(1+\frac{2}{x-n\cdot \min{(1,\nu)}}\right) = \nu\left(\frac{x-n\cdot \min(1,\nu)+2}{x-n\cdot \min(1,\nu)}\right) \enspace .
\]
(we have that $x-n\cdot \min{(1,\nu)}>0$ from above).

Thus, \begin{align*}
\frac{\nu\left(1+\frac{2}{x}\right)}{1-\frac{2\cdot n\cdot \min(1,\nu)}{x(x+1)}}&\geq \nu\left(1+\frac{2}{x-n\cdot \min(1,\nu)}\right)&\Leftrightarrow\\
\frac{(x+2)(x+1)}{x(x+1)-2\cdot n\cdot \min(1,\nu)}&\geq \frac{x-n\cdot \min(1,\nu)+2}{x-n\cdot \min(1,\nu)}&\Leftrightarrow\\
(x+2)(x+1)(x-n\cdot \min(1,\nu))&\geq (x-n\cdot \min(1,\nu)+2)(x(x+1)-2\cdot n\cdot \min(1,\nu))&\Leftrightarrow\\
(x+2)(x+1)(x-n\cdot \min(1,\nu)) &- (x-n\cdot \min(1,\nu)+2)(x(x+1)-2\cdot n\cdot \min(1,\nu))\geq 0&\Leftrightarrow\\
2n\min(1,\nu)(1-n\min(1,\nu))&\geq 0&
\end{align*}

Note that $n$ and $\min(1,\nu)$ are in $[0,1]$ and thus, the inequality is true, because each factor is $\geq 0$, and we are done.

We proceed to the second claim and show that for any $0<x,\nu$ and $n\in [0,1]$, we have 
\[
\nu\left(1+\frac{2}{x}\right)-\frac{2\cdot n\cdot \min(1,\nu)}{x(x+1)}\geq \nu\left(1+\frac{2}{x+n\cdot \min(1,\nu^{-1})}\right)
\]
We have that \[
\nu\left(1+\frac{2}{x}\right)-\frac{2\cdot n\cdot \min(1,\nu)}{x(x+1)}=\nu\cdot \frac{x+2}{x}-\frac{2\cdot n\cdot \min(1,\nu)}{x(x+1)}=\frac{\nu(x+2)(x+1)-2\cdot n\cdot \min(1,\nu)}{x(x+1)} \enspace .
\]

Also, \[
\nu(1+\frac{2}{x+n\cdot \min(1,\nu^{-1})})=\nu\cdot \frac{x+n\cdot \min(1,\nu^{-1})+2}{x+n\cdot \min(1,\nu^{-1})} \enspace .
\]

Thus, \begin{align*}
\nu\left(1+\frac{2}{x}\right)-\frac{2\cdot n\cdot \min(1,\nu)}{x(x+1)}&\geq \nu\left(1+\frac{2}{x+n\cdot \min(1,\nu^{-1})}\right) &\Leftrightarrow\\
\frac{\nu(x+2)(x+1)-2\cdot n\cdot \min(1,\nu)}{x(x+1)}&\geq \nu\cdot \frac{x+n\cdot \min(1,\nu^{-1})+2}{x+n\cdot \min(1,\nu^{-1})} &\Leftrightarrow\\
(\nu(x+2)(x+1)-2\cdot n\cdot \min(1,\nu))(x+n\cdot \min(1,\nu^{-1}))&\geq \nu\cdot x(x+1)\cdot (x+n\cdot \min(1,\nu^{-1})+2)&\Leftrightarrow\\
(\nu(x+2)(x+1)-2\cdot n\cdot \min(1,\nu))(x+n\cdot \min(1,\nu^{-1}))&- \nu\cdot x(x+1)\cdot (x+n\cdot \min(1,\nu^{-1})+2)\geq 0&\Leftrightarrow\\
2n\min(1,\nu)(1-n\min(1,\nu^{-1}))&\geq 0 &
\end{align*}

Note that $n$, $\min(1,\nu)$ and $\min(1,\nu^{-1})$ are in $[0,1]$ and thus, the inequality is true, because each factor is $\geq 0$.
\end{proof}

\subsection{Putting it all together}
\label{sec:combine}
In this section we combine the ingredients developed in the previous sections to solve arbitrary strongly-connected mean-payoff games.  


\begin{theorem}
\label{thm:SCC-MP}
Consider a strongly-connected mean-payoff poorman game $\G$ and a ratio $r \in [0,1]$. The value of $\G$ with respect to $r$ equals the value of the random-turn mean-payoff game $\RT^r(\G)$ 
in which \Max chooses the next move with probability $r$, thus $\MP^r(\G) = \MP(\RT^r(\G))$. 
\end{theorem}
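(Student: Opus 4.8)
The plan is to assemble the three preceding subsections into a single strategy argument, after two reductions. First I would normalize to the case $\MP(\RT^r(\G)) = 0$ by subtracting the random-turn value from every weight (this shifts $\MP^r(\G)$ and $\MP(\RT^r(\G))$ by the same amount), and dispose of the boundary ratios $r \in \{0,1\}$ separately, since there $\RT^r(\G)$ is controlled by a single player and the poorman value is just the best/worst mean-cycle that player can force; the interesting range is $r = \frac{\nu}{1+\nu} \in (0,1)$. By the antisymmetry of random-turn values under negating weights and swapping the bias, $\MP(\RT^{1-r}(-\G)) = -\MP(\RT^r(\G)) = 0$, so it suffices to prove a single inequality: with initial ratio $r+\epsilon$, \Max guarantees payoff $\ge 0$. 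The matching \Min statement — that with ratio $1-r+\epsilon$ she forces payoff $\le 0$ — then follows by applying that very result to $-\G$ with the roles reversed, where \Min as a maximizer keeps the negated energy bounded below, i.e.\ keeps $E_\G$ bounded above, so that $\liminf_n \frac1n E_\G(\pi^n) \le \limsup_n \frac1n E_\G(\pi^n) \le 0$.

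For the \Max direction I would fix optimal positional strategies defining $v^+,v^-$, the potentials $\Pot^r$ and strengths $\St^r$ of Section~\ref{sec:potential}, and have \Max play the budget-sequence strategy of Section~\ref{sec:walking-on-a-seq}: keep a position $x>0$ on the walk, normalize \Min's budget to $1$, bid $\beta_x \cdot \nSt(v)$ at vertex $v$, and move to $v^+$ upon winning. Normalizing \Max's budget to $\nu' > \nu$, I pick $x_0 > \frac{2\nu}{\nu'-\nu}$ so that $\nu_{x_0} < \nu'$, establishing the invariant that \Max's renormalized budget always exceeds $\nu_x$. The two inequalities of Lemma~\ref{lem:invariant}, instantiated with $n = \nSt(v)$, are exactly what preserve this invariant: the second covers a win (\Max pays the bank, position moves up by $\nSt(v)\min(1,\nu^{-1})$), and the first covers a loss (renormalizing \Min's shrunken budget, position moves down by $\nSt(v)\min(1,\nu)$). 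One checks $\beta_x \le \nu_x$ so bids stay legal, and a geometric slush fund as in Theorem~\ref{thm:reach} handles tie-breaking and strictness.

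The payoff bound then comes from Lemma~\ref{lem:magic}. Summing the per-step changes in position over a prefix $\pi^n$ from $v_0$ to $v_n$ and clearing the factor $S = \max_v|\St^r(v)|$ gives the identity $I(\pi^n) - \nu\,G(\pi^n) = c\,(x_n - x_0)$, with $c = \nu S$ when $\nu \ge 1$ and $c = S$ when $\nu < 1$, both positive. Since \Max moves to $v^+$ on a win and \Min can only move to a vertex of potential $\ge \Pot^r(v^-)$ on a loss, Lemma~\ref{lem:magic} applies to the actual play and yields $E(\pi^n) \ge \Pot^r(v_0) - \Pot^r(v_n) + c\,(x_n - x_0)$. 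Hence a uniform lower bound on $x_n$ produces a uniform lower bound on the energy, and therefore $\liminf_n \frac1n E(\pi^n) \ge 0$, which is what the \Max direction requires.

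The main obstacle is precisely this boundedness of the walk, the analog of ``the energy never reaches $0$'' from the warm-up in Section~\ref{sec:warm-up}. The difficulty is that, unlike the uniform-strength warm-up, the step sizes vary per vertex and the ratio is arbitrary, so one cannot simply read off a floor. The argument I would give is self-correcting: as $x \to 0$ the bid $\beta_x\cdot\nSt(v) = \frac{2\min(1,\nu)}{x(x+1)}\nSt(v)$ blows up while \Min's normalized budget stays $1$, so below a threshold $x_{\min} > 0$ — determined by $\min\{\nSt(v) : \nSt(v) > 0\}$, a fixed positive constant since $V$ is finite — \Max's bid exceeds \Min's entire budget and \Max is forced to win, which raises the position. (Strength-$0$ vertices never move the position and so are harmless; the degenerate all-strength-$0$ case, where $\Pot^r$ is constant across neighbors, is bounded directly by Lemma~\ref{lem:magic}.) Making this floor precise while simultaneously guaranteeing legality of the bids and maintaining the invariant is the crux of the whole theorem.
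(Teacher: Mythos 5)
Your proposal is correct and follows essentially the same route as the paper's proof: normalize to $\MP(\RT^r(\G))=0$, obtain the \Min direction by negating weights and swapping roles, run the budget-walk strategy with bid $\beta_x\cdot\nSt(v)$ and move to $v^+$ on a win, preserve the invariant via Lemma~\ref{lem:invariant}, and convert positivity of the walk into an energy lower bound via Lemma~\ref{lem:magic}. The one slight imprecision is your uniform floor $x_{\min}$ (the walk can in fact dip below any fixed positive threshold at a high-strength vertex); the paper instead observes that whenever a loss would push the position to $x\le 0$, i.e.\ $x\le n\min(1,\nu)$, the bid $2n\min(1,\nu)/(x(x+1))\ge 1$ already consumes \Min's entire normalized budget, so the walk simply never reaches $0$ --- and $x>0$ is all the energy bound requires.
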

\begin{proof}
We assume w.l.o.g.\ that $\MP(\RT^r(\G)) = 0$ since otherwise we decrease this value from all weights. Also, the case where $r \in \set{0,1}$ is easy since $\RT^r(\G)$ is a graph and in $\G$, one of the players can win all biddings. Thus, we assume $r \in (0,1)$. Recall that $\MP(\pi) = \lim \inf_{n \to \infty} \frac{E(\pi^n)}{n}$. We show a \Max strategy that, when the game starts from a vertex $v \in V$ and with an initial ratio of $r + \epsilon$, guarantees that the energy is bounded below by a constant, which implies $\MP(\pi) \geq 0$. 

Note that showing such a strategy for \Max suffices to prove $\MP^r(\G) = 0$ since our definition for a payoff favors \Min. Consider the game $\G'$ that is obtained from $\G$ by multiplying all weights by $-1$. We associate \Min in $\G$ with \Max in $\G'$, thus an initial ratio of $1-r-\epsilon$ for \Min in $\G$ is associated with an initial ratio of $r+\epsilon$ of \Max in $\G'$. We have $\MP(\RT^{1-r}(\G'))= -\MP(\RT^{r}(\G)) = 0$. Let $f$ be a \Max strategy in $\G'$ that guarantees a non-negative payoff. Suppose \Min plays in $\G$ according to $f$ and let $\pi$ be a play when \Max plays some strategy. Since $f$ guarantees a non-negative payoff in $\G'$, we have $\lim \sup_{n \to \infty} E(\pi^n)/n \leq 0$ in $\G$, and in particular $\MP(\pi) = \lim \inf_{n \to \infty} E(\pi^n)/n \leq 0$. 

Before we describe \Max's strategy, we need several definitions. 
In Definition~\ref{def:budget-seq}, we set $\nu_x = \nu \cdot (1+2/x)$, which clearly tends to $\nu$ from above. We can thus choose $\kappa \in \Nat$ such that \Max's ratio is greater than $\nu_{\kappa}$. Suppose \Max is playing according to the strategy we describe below and \Min is playing according to some strategy. The play induces a walk on $\set{\nu_x}_{x \in \Q_{\geq 0}}$, which we refer to as the {\em budget walk}. \Max's strategy guarantees the following:

\vspace{0.05cm}
\noindent{\bf Invariant:} Whenever the budget walk reaches an $x \in \Q$, then \Max's ratio is greater than $\nu_x$. 
\vspace{0.05cm}

The walk starts in $\kappa$ and the invariant holds initially due to our choice of $\kappa$. Suppose the token is placed on the vertex $v \in V$ and the position of the walk is $x$. \Max bids $\nSt(v) \cdot \beta_x$, and he moves to $v^+$ upon winning. Suppose first that $\nu \geq 1$. If \Max wins the bidding, then the next position of the walk is $x + \nSt(v)$, and if \Min wins the bidding, the next position is $x - \nSt(v) \cdot \nu$. When $\nu< 1$, the next position when \Max wins is $x+\nSt(v) \cdot \nu^{-1}$, and when he loses, the next position is $x-\nSt(v)$. Lemma~\ref{lem:invariant} implies that in both cases the invariant is maintained.

\begin{claim}
For every \Min strategy, the budget walk stays on positive positions and never reaches $x=0$. 
\end{claim}
Suppose $\nu \geq 1$. Thus, when \Max loses with a bid of $2n/x(x+1)$, we step down $n$ steps. In order to reach $x=0$, the position needs to be $x=n$. But then, \Max's bid is $2n/n(n+1) \geq 1$, thus \Max wins the bidding since \Min's budget is $1$. Similarly, when $\nu<1$, when the bid is $2n\nu/x(x+1)$, we step down $n\cdot \nu$, and we need $x=n\cdot \nu$ to reach $x=0$. Again, since $2n\nu/n\nu(n\nu+1) \geq 1$, \Max wins the bidding.

\begin{claim}
The strategy is legal; \Max's bids never exceed his available budget.
\end{claim}
Indeed, we have $2n \min(1,\nu)/x(x+1) \leq \nu(1+2/x)$, for every $0 \leq n \leq 1$ and $\nu > 0$ since $x>0$.

\begin{claim}
The energy throughout a play is bounded from below. Formally, there exists a constant $c \in \Real$ such that for every \Min strategy and a finite play $\pi$, we have $E(\pi) \geq c$. 
\end{claim}
Consider a finite play $\pi$. We view $\pi$ as a sequence of vertices in $\G$. Recall that the budget walk starts at $\kappa$, that $G(\pi)$ and $I(\pi)$ represent sums of strength of vertices, and that $S = \max_{v \in V} |\St^r(v)|$ and $\nSt(v) = \St^r(v)/S$. Suppose the budget walk reaches $x$ following the play $\pi$. Then, when $\nu \geq 1$, we have $x = \kappa - G(\pi)/S + I(\pi)/\nu S$. Combining with $x \geq 0$, we have $S\cdot \kappa \cdot \nu \leq  -G(\pi)\cdot \nu + I(\pi)$. Let $P = \max_{u,v} \Pot^r(u) - \Pot^r(v)$. Re-writing Lemma~\ref{lem:magic}, we obtain $- G(\pi)\cdot \nu + I(\pi) \leq E(\pi) + P$. Combining the two, we have $E(\pi) \geq -P -S \cdot \kappa\cdot \nu$. Similarly, when $\nu < 1$, we have $x= \kappa - G(\pi)\cdot \nu/S + I(\pi)/S$ and combining with Lemma~\ref{lem:magic}, we obtain $E(\pi) \geq -P - S\cdot \kappa$, and we are done.
\end{proof}

\begin{remark}{\bf Richman vs poorman bidding.}
An interesting connection between poorman and Richman biddings arrises from Theorem~\ref{thm:SCC-MP}. Consider a strongly-connected mean-payoff game $\G$. For an initial ratio $r \in [0,1]$, let $\MP^r_\P(\G)$ denote the value of $\G$ with respect to $r$ with poorman bidding. With Richman bidding \cite{AHC19}, the value does not depend on the initial ratio rather it only depends on the structure of $\G$ and we can thus omit $r$ and use $\MP_\R(\G)$. Moreover, mean-payoff Richman-bidding games are equivalent to uniform random-turn games, thus $\MP_\R(\G) = \MP(\RT^{0.5}(\G))$. Our results show that poorman games with initial ratio $0.5$ coincide with Richman games. Indeed, we have $\MP_\R(\G) = \MP_\P^{0.5}(\G)$. To the best of our knowledge such a connection between the two bidding rules has not been identified before. 
\end{remark}

\begin{remark}{\bf Energy poorman games.}
The proof technique in Theorem~\ref{thm:SCC-MP} extends to  energy poorman games. Consider a strongly-connected mean-payoff game $\G$, and let $r \in [0,1]$ such that $\MP^r(\G) = 0$. Now, view $\G$ as an energy poorman game. The proof of  Theorem~\ref{thm:SCC-MP} shows that when \Max's initial ratio is $r+\epsilon$, there exists an initial energy level from which he can win the game. On the other hand, when \Max's initial ratio is $r-\epsilon$, \Min can win the energy game from every initial energy. Indeed, consider the game $\G'$ that is obtained from $\G$ by multiplying all weights by $-1$. Again, using Theorem~\ref{thm:SCC-MP} and associating \Min with \Max, \Min can keep the energy level bounded from above, which allows him, similar to the qualitative case, to play a strategy in which he either wins or increases his ratio by a constant. Eventually, his ratio is high enough to win arbitrarily many times in a row and drop the energy as low as required.
\end{remark}

\begin{remark}{\bf A general budget sequence.}
The proof of Theorem~\ref{thm:SCC-MP} uses four properties of the ``budget sequence'' $\set{\nu_x}_{x \geq 0}$ and $\set{\beta_x}_{x \geq 0}$ that is defined in Definition~\ref{def:budget-seq}: (1) the invariant between \Max's ratio and $r_x$ is maintained (shown in Lemma~\ref{lem:invariant}), (2) the bids never exceed the available budget, (3) $\lim_{x \to \infty} \nu_x = \nu$, and (4) the walk never reaches $x=0$. The existence of a budget sequence with these properties is shown in~\cite{AHZ19} for {\em taxman bidding}, which generalize both Richman and poorman bidding: taxman bidding is parameterized with a constant $\tau \in [0,1]$, where the higher bidder pays portion $\tau$ of his bid to the other player and portion $(1-\tau)$ to the bank. Unlike that proof, we define an explicit budget sequence for poorman bidding. 
\end{remark}

\subsection{Extention to general mean-payoff games}
\label{sec:general-MP}
We extend the solution in the previous sections to general graphs in a similar manner to the qualitative case; we first reason about the BSCCs of the graph and then construct an appropriate reachability game on the rest of the vertices. Recall that, for a vertex $v$ in a mean-payoff game, the ratio $\thresh(v)$ is a necessary and sufficient initial ratio to guarantee a payoff of $0$. 

Consider a mean-payoff poorman game $\G = \zug{V, E, w}$. Recall that, for $v \in V$, $\thresh(v)$ is the necessary and sufficient initial ratio for \Max to guarantee a non-positive payoff. Let $S_1,\ldots,S_k \subseteq V$ be the BSCCs of $\G$ and $S = \bigcup_{1 \leq i \leq k}S_i$. For $1 \leq i \leq k$, the mean-payoff poorman game $\G_i = \zug{S_i, E_{|S_i}, w_{|S_i}}$ is a strongly-connected game. We define $r_i \in [0,1]$ as follows. If there is an $r \in [0,1]$ such that $\MP^r(\G_i) = 0$, then $r_i =r$. Otherwise, if for every $r$, we have $\MP^r(\G_i) > 0$, then $r_i=0$, and if for every $r$, we have $\MP^r(\G_i) < 0$, then $r_i =1$. By Theorem~\ref{thm:SCC-MP}, for every $v \in S_i$, we have $\thresh(v) = r_i$. We construct a {\em generalized reachability game} $\G'$ that corresponds to $\G$ by replacing every $S_i$ in $\G$ with a vertex $u_i$. \PO wins a path in $\G$ iff it visits some $u_i$ and when it visits $u_i$, \PO's ratio is at least $r_i$. It is not hard to generalize the proof of Theorem~\ref{thm:reach} to generalized reachability poorman games and obtain  the following.

\stam{
Recall that $\thresh(v)$ is a necessary and sufficient ratio for \Max to guarantee a non-negative payoff. We characterize the threshold ratios in $\G$. Let $S_1,\ldots,S_k \subseteq V$ be the BSCCs of $\G$ and $S = \bigcup_{1 \leq i \leq k}S_i$. For $1 \leq i \leq k$, the poorman mean-payoff game $\G_i = \zug{S_i, E_{|S_i}, w_{|S_i}}$ is a strongly-connected game. Let $r_i \in \Real$ be such that $\MP^{r_i}(\G_i)=0$. By Theorem~\ref{thm:SCC-MP}, for every $v \in S_i$, we have $\thresh(v) = r_i$. We extend to the vertices that are not in $S$ using a similar idea to poorman reachability games. Let $g:V \rightarrow [0,1]$ such that for (1) every $1 \leq i \leq k$ and $v \in S_i$, we have $g(v) = r_i$, (2) for every $v \in (V \setminus S)$, there are two vertices $v^-$ and $v^+$ such that $g(v) = \frac{g(v^+)}{1-g(v^-)+g(v^+)}$, and 
(3) for every $v' \in N(v)$, we have $g(v^-) \leq g(v') \leq g(v^+)$. It is not hard to adapt the proof of Theorem~\ref{thm:reach} to show a \Max strategy that guarantees the following: When the game starts in a vertex $v \in V$ and \Max's initial ratio is $g(v) + \epsilon$, then he forces the game to reach some BSCC $S_i$, for some $1 \leq i \leq k$, such that when reaching $S_i$, \Max's ratio exceeds $r_i$. From there, he uses the strategy described in Theorem~\ref{thm:SCC-MP}. 
}

\begin{theorem}
\label{thm:MP-general}
The threshold ratios in a mean-payoff poorman game $\G$ coincide with the threshold ratios in the generalized reachability game that corresponds to $\G$. 
\end{theorem}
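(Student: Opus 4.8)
The plan is to establish a two-way correspondence between threshold ratios in the mean-payoff game $\G$ and threshold ratios in the generalized reachability game $\G'$ obtained by collapsing each BSCC $S_i$ to a single vertex $u_i$ tagged with its critical ratio $r_i$. The core structural fact, already supplied by Theorem~\ref{thm:SCC-MP}, is that for every $v \in S_i$ we have $\thresh(v) = r_i$ in $\G$: inside a BSCC the necessary and sufficient ratio to guarantee a non-positive payoff is exactly the ratio $r_i$ at which $\MP^{r_i}(\G_i) = 0$ (with the degenerate cases $r_i \in \set{0,1}$ when the sign is uniform). So the content of the theorem is really about the non-BSCC vertices, where the game is transient and the payoff is determined entirely by which BSCC is eventually reached and with what budget ratio. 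The generalized reachability game captures precisely this: \PO (i.e.\ \Max) wins iff he can force the play into some $u_i$ while retaining a budget ratio of at least $r_i$.

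First I would make precise what the generalized reachability game $\G'$ is and what ``threshold ratio'' means in it, noting that the ordinary double-reachability analysis of Theorem~\ref{thm:reach} extends: one shows the same recurrence $\thresh(v) = \thresh(v^+)/\bigl(1 - \thresh(v^-) + \thresh(v^+)\bigr)$ holds at transient vertices, where now the ``targets'' $u_i$ carry boundary values $\thresh(u_i) = r_i$ rather than $0$ or $1$. The extension amounts to re-running the slush-fund argument from the proof of Theorem~\ref{thm:reach} with these generalized boundary conditions in place of the pure reachability targets. Concretely I would argue the upper direction and the lower direction separately. For the upper direction (\Max achieves the payoff): given initial ratio $\thresh(v) + \epsilon$ in $\G'$, \Max splits $\epsilon$ into a reachability budget and a tail budget; he uses the reachability strategy of Theorem~\ref{thm:reach} (with the $r_i$ boundary values) to force the play into some $S_i$ while guaranteeing that upon entry his ratio exceeds $r_i + \epsilon'$, and from that point he switches to the strongly-connected strategy of Theorem~\ref{thm:SCC-MP} inside $S_i$, which secures a non-positive payoff. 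Since only finitely much energy accumulates before entering $S_i$, the transient prefix does not affect the $\liminf$ defining the mean-payoff, so the payoff guarantee survives the switch.

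For the lower direction (\Min prevents it), I would invoke the dual reachability strategy: if \Min's ratio exceeds $1 - \thresh(v)$, he either keeps the play forever outside $S$ — impossible, since every infinite play eventually stays in some BSCC — or forces entry into a BSCC $S_i$ with his own ratio high enough that, by the dual form of Theorem~\ref{thm:SCC-MP} inside $S_i$, he secures a strictly positive payoff. Combining the two directions yields $\thresh_{\G}(v) = \thresh_{\G'}(v)$ at every vertex, which is the claimed coincidence. The main obstacle I anticipate is the clean handoff at the moment the play enters a BSCC: one must guarantee that the budget ratio on entry strictly exceeds $r_i$ (not merely equals it), since Theorem~\ref{thm:SCC-MP} gives guarantees only for ratio $r_i + \epsilon$, and one must confirm that the finite transient energy truly washes out under $\liminf$. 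Both are handled by the standard device of reserving a small positive slush fund throughout the reachability phase and appealing to the structural fact that every infinite play is eventually absorbed into a single BSCC; I would make sure the $\epsilon$-bookkeeping is consistent between the two phases so that the reachability guarantee feeds the exact hypothesis Theorem~\ref{thm:SCC-MP} requires.
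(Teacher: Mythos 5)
Your proposal follows essentially the same route as the paper: establish $\thresh(v)=r_i$ inside each BSCC via Theorem~\ref{thm:SCC-MP}, collapse each $S_i$ to a target $u_i$ carrying boundary value $r_i$, and re-run the recurrence and slush-fund argument of Theorem~\ref{thm:reach} on the transient part, with the handoff into the BSCC handled by reserving a positive $\epsilon'$ and noting that the finite transient prefix does not affect the $\liminf$. In fact the paper leaves this step as ``not hard to generalize,'' so your explicit treatment of the entry-ratio and $\epsilon$-bookkeeping supplies detail the paper omits.
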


\subsection{Applying bidding games in reasoning about auctions for online advertisements}
\label{sec:application}
In this section we show an application of mean-payoff poorman-bidding games in reasoning about auctions for online advertisements. A typical webpage has {\em ad slots}; e.g., in Google's search-results page, ads typically appear above or beside the ``actual'' search results. Different slots have different value depending on their positions; e.g., slots at the top of the page are typically seen first, thus generate more clicks and are more valuable. A large chunk of the revenue of companies like Google comes from auctions for allocating ad slots that they regularly hold between advertisement companies. 

Consider the following auction mechanism. At each time point (e.g., each day), a slot is auctioned and the winner places an ad in the slot. It is common practice in auctions for online ads to hold {\em second-price} auctions; namely, the higher bidder sets the ad and pays the bid of the second-highest bidder to the auctioneer. Suppose there are $k \in \Nat$ ad slots. We take the view-point of an advertiser. The state of the webpage is given by $\bar{s} \in \set{0,1}^k$, where an advertiser's ad appears in a slot $1 \leq i \leq k$ iff $s_i = 1$. We assume that we are given a reward function $\rho: \set{0,1}^k \rightarrow \Q$ that assigns the utility obtained from each state $\bar{s} \in \set{0,1}^k$; e.g., the reward can be the expected revenue, which is the expected number of clicks on his ads times the expected revenue from each click. The utility for an infinite sequence $\bar{s_1},\bar{s_2},\ldots$ is the mean-payoff of $\rho(\bar{s_1}),\rho(\bar{s_2}),\ldots$. We are interested in finding an optimal bidding strategy in the ongoing auction under two simplifying assumptions: (1) the utility is obtained only from the ads and does not include the price paid for them, and (2) we assume two competitors and full information of the budgets. We obtain an optimal bidding strategy by finding an optimal strategy for \Max in a mean-payoff poorman-bidding game. In Section~\ref{sec:disc}, we discuss extensions of the bidding games that we study in this paper, that are needed to weaken the two assumptions above.

As a simple example, the special case of one ad slot is modelled as the game in Fig.~\ref{fig:loops}: in each turn the ad slot is auctioned, \Max gets a reward of $1$ when his ad shows and a penalty of $-1$ when the competitor's ad is shown. We formalize the general case. Consider an ongoing auction with $k$ slots and a reward function $\rho$. We construct a mean-payoff poorman-bidding game $\A_{k,\rho} = \zug{V, E, w}$ as follows. We define $V = \set{1,\ldots,k} \times \set{0,1}^k$. Consider $v = \zug{\ell, \bar{s}} \in V$, where $1 \leq \ell \leq k$ and $\bar{s} = \zug{s_1,\ldots, s_k} \in \set{0,1}^k$. The vector $\bar{s}$ represents the state of the webpage following the previous bidding. The slot that is auctioned at $v$ is $\ell$, thus the vertex $v$ has two neighbors $u_1 = \zug{\ell^1,\bar{s^1}}$ and $u_2=\zug{\ell^2,\bar{s^2}}$ with $\ell^1 = \ell^2 = \ell+1 \mod k$.  The state of the slots apart from the $\ell$-th slot stay the same, thus for every $i \neq \ell$, we have $s^1_i = s^2_i = s_i$. The vertex $u_1$ represents a \Max win in the bidding and $u_2$ a \Max lose, thus $s^1_\ell = 1$ and $s^2_\ell = 0$. Finally, the weight  of $v$ is $\rho(\bar{s})$. Note that $\A_{k,\rho}$ is a strongly-connected mean-payoff poorman-bidding game.

\begin{theorem}
\label{thm:app}
Consider a second-price ongoing auction with $k$ slots and a reward function $\rho$. An optimal strategy for \Max in the poorman-bidding game $\A_{k,\rho}$ coincides with an optimal bidding strategy in the auction.
\end{theorem}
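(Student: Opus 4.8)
The plan is to show that $\A_{k,\rho}$ is a faithful model of the auction, by matching the objectives, matching the slot allocation, and proving that the difference in payment rules---second price in the auction versus first price in the poorman game---is immaterial. First I would fix a correspondence between auction plays and plays of $\A_{k,\rho}$. A history of the auction is a sequence of rounds, each consisting of the slot being auctioned, the two bids, the winner, and the resulting state $\bar s \in \set{0,1}^k$; this maps bijectively to a play of $\A_{k,\rho}$, where the round auctioning slot $\ell$ from state $\bar s$ is a visit to $\zug{\ell,\bar s}$, a \Max win is the move to $u_1=\zug{\ell+1\bmod k,\bar s^1}$ (with $s^1_\ell=1$) and a \Max loss the move to $u_2$ (with $s^2_\ell=0$). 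Under this correspondence the sequence of webpage states in the auction is exactly the sequence of $\bar s$-components of the vertices visited. Since $w(\zug{\ell,\bar s})=\rho(\bar s)$ and the auction utility is the mean-payoff of $\rho(\bar s_1),\rho(\bar s_2),\dots$, and since this reward depends only on the visited states and not on the payments, the utility of a play in the auction equals its mean-payoff in $\A_{k,\rho}$; thus the two objectives coincide.

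Second, in both models the higher bidder is allocated the slot (respectively moves the token), the loser's budget is unchanged, and the money paid leaves the system (to the auctioneer, respectively to the bank). The only discrepancy is the amount the winner pays: his own bid in $\A_{k,\rho}$ (first price) versus the loser's bid in the second-price auction. The main obstacle, and the crux of the proof, is to show that this discrepancy changes neither the value nor the guaranteeing behaviour.

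I would resolve it by a single-bid saddle-point analysis. At a vertex $v$ with \Max-budget $x$ and \Min-budget $y$, if \Max bids to win he must still be able to win from the favourable neighbour $v^-$, and if he is outbid he must win from the unfavourable neighbour $v^+$. Carrying out the min--max over the two simultaneous bids under the second-price rule---where the losing player adversarially sets the winner's payment, so that \Min, when losing, bids just below \Max to drain him maximally, and \Max, when losing, forces \Min to overpay---yields, after the (routine) algebra, exactly the first-price poorman recurrence $\thresh(v)=\thresh(v^+)/\bigl(1-\thresh(v^-)+\thresh(v^+)\bigr)$ of Theorem~\ref{thm:reach}. The key point is that this per-bid equivalence holds verbatim in every shifted game $\A_{k,\rho}^c$ (and in the energy games underlying the analysis), so the second-price auction and the poorman game $\A_{k,\rho}$ have identical thresholds, and hence identical mean-payoff values via the definition of $\MP^r$ in terms of $\thresh$ and via Theorem~\ref{thm:SCC-MP}, applied to the strongly-connected game $\A_{k,\rho}$; in particular both equal $\MP(\RT^r(\A_{k,\rho}))$.

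Finally I would translate the optimal \Max strategy of $\A_{k,\rho}$ furnished by Theorem~\ref{thm:SCC-MP} into the auction by the bid transformation dictated by the saddle point above: \Max maintains the same budget walk and the invariant that his ratio exceeds $\nu_x$, but chooses his auction bid against the worst-case opponent bid so that, whatever \Min actually bids, the resulting budget update is the one prescribed by Lemma~\ref{lem:invariant}. The one point requiring care, which I would verify explicitly, is that this invariant is preserved under the second-price payment regardless of \Min's realized bid (the loser's-bid payment only helps \Max relative to the worst case he bid against); together with the dual argument for \Min this shows the translated strategy guarantees $\MP(\RT^r(\A_{k,\rho}))$ and that no \Min strategy does better, so the optimal game strategy is an optimal bidding strategy in the auction.
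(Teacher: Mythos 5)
Your proposal is correct and its decisive step is the same as the paper's: the paper's entire proof consists of observing that the optimal first-price strategy of Theorem~\ref{thm:SCC-MP} transfers verbatim to second-price rules, because when \Max wins he pays at most his prescribed bid $b$ and when \Min wins he pays at least $b$, so the budget invariant of Lemma~\ref{lem:invariant} is preserved in both cases --- which is exactly your final paragraph (together with the dual argument for \Min). The intermediate saddle-point derivation of the threshold recurrence is redundant given that direct strategy transfer, and the paper omits it.
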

\begin{proof}
The only point that requires proof is that mean-payoff poorman-bidding games are equivalent to mean-payoff games with second-price auctions. Consider a strongly-connected mean-payoff game $\G$. Let $r \in (0,1)$. Suppose \Max's initial budget is $r+\epsilon$, for $\epsilon > 0$. Theorem~\ref{thm:SCC-MP} constructs a \Max strategy $f$ that guarantees a payoff of at least $\MP(\RT^r(\G))$ under poorman bidding rules. A close look at this strategy reveals that it ensures a payoff of at least $\MP(\RT^r(\G))$ under second-price rules. Indeed, let $b$ be the \Max bid prescribed by $f$ following a finite play. Then, if \Max wins the bidding, his payment is at most $b$. On the other hand, if \Min wins the bidding, he pays at least $b$. In both cases the invariant on \Max's budget is maintained as in the proof of Theorem~\ref{thm:SCC-MP}. Finally, a dual argument as in Theorem~\ref{thm:SCC-MP} shows that \Min can guarantee a payoff of at most $\MP(\RT^r(\G))$ with second-price bidding rules. We thus conclude that the value of $\G$ under second-price bidding coincides with the value under poorman bidding, and we are done.
\end{proof}

We can use Theorem~\ref{thm:app} to answer questions of the form ``can an advertiser guarantee that his ad shows at least half the time, in the long run?''. Indeed, set $\rho(\bar{s}) = 1$ when the ad shows and $\rho(\bar{s}) = 0$ when it does not. Then, the payoff corresponds to the long-run average time that the ad shows.

\section{Computational Complexity}
We study the complexity of finding the threshold ratios in poorman games. We formalize this search problem as the following decision problem. Recall that threshold ratios in reachability poorman games may be irrational (see Theorem~\ref{thm:reach}).

\begin{center}
{\bf \THRESHBUDG} Given a bidding game $\G$, a vertex $v$, and a ratio $r \in [0,1] \cap \Q$, decide whether $\thresh(v) \geq r$. 
\end{center}

\stam{
We study the complexity of finding the threshold ratios in poorman games. We formalize this search problem in the two following problems. Recall that threshold ratios in poorman reachability games may be irrational (see Theorem~\ref{thm:reach}). 
\begin{itemize}
\item {\bf \THRESHBUDG} Given a bidding game $\G$, a vertex $v$, and a ratio $r \in [0,1]$, decide whether $\thresh(v) \geq r$. 
\item {\bf \ATHRESHBUDG} Given a bidding game $\G$, a vertex $v$, a ratio $r \in ([0,1] \cap \Q)$, and $\epsilon \in \Q_{>0}$, assuming that $\thresh(v)\not\in (r-\epsilon,r+\epsilon)$, decide whether $\thresh(v) \geq r + \epsilon$ or $\thresh(v) \leq r - \epsilon$.
\end{itemize}
Note that given a game $\G$, a vertex $v$, and $\epsilon >0$, in order to find a ratio $r \in [0,1]$ that is within $\epsilon$ of $\thresh(v)$, we can use an algorithm for the \ATHRESHBUDG problem as a subroutine in binary search. 
}

\begin{theorem}
\label{thm:qual-compl}
For poorman parity games, \THRESHBUDG is in PSPACE.
\end{theorem}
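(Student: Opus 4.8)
The plan is to express \THRESHBUDG for parity poorman games as a decision problem in the existential theory of the reals, which is known to be in PSPACE by \cite{Can88}. By Theorem~\ref{thm:parity}, a parity poorman game reduces linearly to a (double-)reachability poorman game: the BSCCs of $\G$ are classified as winning for \PO or \PT purely by the parity index of their maximal-priority vertex, so their threshold ratios are fixed to $0$ or $1$, and the remaining vertices are treated as a reachability poorman game. Thus it suffices to decide, for the resulting reachability game, whether $\thresh(v) \geq r$. The first step is therefore to perform this reduction explicitly and in polynomial time, reducing the question to reasoning about the threshold function of a reachability poorman game.

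The key technical ingredient is the fixed-point characterization from Theorem~\ref{thm:reach}: the threshold function $f \colon V \to [0,1]$ is the \emph{unique} function satisfying $f(u_1)=0$, $f(u_2)=1$, and the local equation $f(v) = f(v^+)/\bigl(1 - f(v^-) + f(v^+)\bigr)$ for every interior vertex, where $v^+,v^-$ are the $f$-maximizing and $f$-minimizing neighbors of $v$. I would encode the existence of such a function by introducing one existentially quantified real variable $x_v$ for each vertex $v$, asserting the boundary constraints $x_{u_1}=0$ and $x_{u_2}=1$, and for each interior vertex $v$ asserting that there exist neighbors realizing the min and max together with the ordering constraints $x_{v^-} \leq x_{v'} \leq x_{v^+}$ for all $v' \in N(v)$ and the defining equation. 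Since the choice of which neighbor is $v^+$ and which is $v^-$ is a discrete selection, I would either disjoin over the (polynomially-describable) orderings or, more cleanly, write the constraints using min/max expressed via the ordering inequalities, keeping the formula of polynomial size. The query $\thresh(v) \geq r$ is then the sentence $\exists (x_w)_{w \in V}$ such that all these polynomial (in)equalities hold and $x_v \geq r$; by uniqueness of $f$, any satisfying assignment must equal $f$, so the sentence is true iff $\thresh(v) \geq r$.

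Having written the sentence, I would invoke the result of \cite{Can88} (or the standard PSPACE decision procedure for the existential theory of the reals) to conclude the problem is in PSPACE. The polynomial bound on formula size matters: the number of variables equals $|V|$, each constraint involves a constant number of variables, and the neighbor-selection disjunctions have size polynomial in the out-degree, so the overall sentence is polynomial in the size of $\G$ and in the bit-length of $r$.

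The main obstacle I anticipate is handling the discrete min/max selection of $v^+$ and $v^-$ cleanly within a first-order formula of polynomial size, and arguing rigorously that the encoding is faithful: one must ensure that the existential sentence is satisfiable \emph{exactly} when the genuine threshold function witnesses $\thresh(v) \geq r$, and not satisfied by some spurious assignment that locally solves the equations with an inconsistent choice of extremal neighbors. This is where the \emph{uniqueness} guarantee of Theorem~\ref{thm:reach} is essential—it pins down the solution so that any assignment satisfying the ordering constraints together with the local equations must coincide with $f$—and I would need to verify that the ordering inequalities $x_{v^-} \leq x_{v'} \leq x_{v^+}$ force the selected $v^+,v^-$ to genuinely be extremal, so that the formula cannot be satisfied by an assignment that is not the true threshold function.
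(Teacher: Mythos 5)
Your proposal is correct and follows essentially the same route as the paper: classify the BSCCs by parity to fix their thresholds at $0$ or $1$, encode the reachability threshold equations $x_v = x_{v^+}/(1-x_{v^-}+x_{v^+})$ together with the ordering constraints $x_{v^-}\leq x_{v'}\leq x_{v^+}$ as an instance of the existential theory of the reals, rely on the uniqueness of the threshold function to rule out spurious solutions, and invoke Canny's PSPACE decision procedure. The only cosmetic difference is that the paper resolves the discrete choice of extremal neighbors by a nondeterministic guess (using NPSPACE $=$ PSPACE) rather than folding it into the formula, which is the alternative you also mention.
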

\begin{proof}
To show membership in PSPACE, we guess the optimal moves for the two players. To verify the guess, we construct a program of the  {\em existential theory of the reals} that uses the relation between the threshold ratios that is described in Theorem~\ref{thm:reach}.  Deciding whether such a program has a solution is known to be in PSPACE \cite{Can88}. 
Formally, given a parity poorman game $\G = \zug{V, E, p}$ and a vertex $v \in V$, we guess, for each vertex $u \in V$, two neighbors $u^+, u^- \in N(u)$.  We construct the following program. For every vertex $u \in V$, we introduce a variable $x_u$, and  we add constraints so that a satisfying assignment to $x_u$ coincides with the threshold ratio in $u$. Consider a BSCC $S$ of $\G$. Recall that the threshold ratios in $S$ are all either $0$ or $1$, and verifying which is the case can be done in linear time. Suppose the threshold ratios are $\alpha \in \set{0,1}$. We add constraints $x_u = \alpha$, for every $u \in S$. For every vertex $u \in V$ that is not in a BSCC, we have constraints $x_u = \frac{x_{u^+}}{1-x_{u^-}+x_{u^+}}$ and $x_{u^-} \leq x_{u'} \leq x_{u^+}$, for every $u' \in N(u)$. By Theorems~\ref{thm:reach} and~\ref{thm:parity}, a satisfying assignment assigns to $x_u$ the ratio $\thresh(u)$. We conclude by adding a final constraint $x_v \geq r$. Clearly, the program has a satisfying assignment iff $\thresh(v) \geq r$, and we are done.
\end{proof}

\stam{


We continue to the \ATHRESHBUDG problem and show that it is possible to guess numbers that are close to the threshold ratio and verify them using a guess for bids. Consider a choice of a successor vertex $v^-$, for every vertex $v$, which is intuitively the move \PO performs upon winning a bid in $v$. Let $\G'$ be obtained from $\G$ by trimming away all edges apart from edges of the form $\zug{v, v^-}$.  Let $d(v)$ be the distance of $v$ from $u$ in $\G'$, which represents the number of times that \PO still needs to win in order to complete $n$ wins. Let $\epsilon_0 = \thresh(v)$ and, for $d \in \set{1,\ldots, n}$, let $\epsilon_d = \epsilon_{d-1} + (\frac{3}{4})^{d} \cdot \epsilon$. We define $\ell_v$ to be a number in $[\epsilon_{n}-(\frac{3}{4})^{n} \cdot \epsilon, \epsilon_{n}]$ that is representable with $k + 2n$ bits. To verify the correctness of $\ell_v$, we perform another guess of a bid. Let $c_v = (\thresh(v^+) - \thresh(v^-))/(1-\thresh(v^-) + \thresh(v^+))$. We choose $b_v \in [c_v+\frac{1}{4}^{d(v)} \cdot \epsilon, c_v+\frac{2}{4}^{d(v)} \cdot \epsilon]$. 

Suppose the game starts in $s \in V$ and \PO's initial ratio exceeds $\ell_s$, and assume for simplicity that the sum of budgets is $1$. Consider a play in which he bids $b_v$ in each vertex $v \in V$ that terminates once he either wins the game or loses the first bidding. We are interested in the second case. Let $b_v$ be his losing bid and suppose the game reaches $v'$. Then $b_v \geq c_v+\frac{1}{4}^{d(v)} \cdot \epsilon$, which is greater than twice the absolute sum of his previous winning bids. Thus, \PO's budget in absolute value is greater than $\ell_{v'}$ and since the total budget decreased, his ratio of the budget is definitely greater than $\ell_{v'}$.  
\end{proof}}
\stam{
\begin{proof}
We describe the main ideas and the details can be found in App.~\ref{app:qual-compl}. Consider a parity poorman game. For the \THRESHBUDG problem, we guess, for each vertex $v \in V$ two vertices $v^-,v^+ \in N(v)$. We verify the guess in polynomial space, which implies membership in PSPACE due to the equivalence of PSPACE and NPSPACE \cite{AB09},  by devising a program that uses the relation between the threshold ratios that is described in Theorem~\ref{thm:reach}. The program is an instance of the {\em existential theory of the reals}, which is known to be in PSPACE \cite{Can88}. 

We proceed to show a nondeterministic polynomial-time algorithm for the \ATHRESHBUDG problem. We focus on poorman reachability games, and the solution to poorman parity games extends easily as in the above. Suppose $\epsilon$ is given in binary and the number of bits in its representation is $k$ and let $n=|V|$. We show that there exist, for each vertex $v \in V$, a number $\ell_v \in \Q$ that is representable with $O(kn)$ bits such that $\thresh(v) +\epsilon \leq \ell_v$, and verifying that this is indeed the case can be done in polynomial time using a polynomial guess. The other case is dual.

Recall that the bid in $v \in V$ that is used in Theorem~\ref{thm:reach} is of the form $b_v + \delta$, 
where $b_v$ is a function of $\thresh(v)$ and $\thresh(v^-)$, and $\delta$ is chosen such that either \PO wins $n$ times in a row, or upon losing a bid, his share of the budget increases by a constant factor. Intuitively, we choose $\ell_v$ to be closer than needed to $\thresh(v)$, and bid $b'_v + \delta'$, where $b'_v$ is computed similar to $b_v$ only with respect to our guesses $\ell_v$ and $\ell_{v^-}$. Since they are approximations, we have $b'_v > b_v$. This is problematic since \PO invests too much when winning bids. To compensate, we increase the exponential factor with which $\delta'$ increases. Thus, when losing a bid, \PO's gain is larger than in the strategy described in Theorem~\ref{thm:reach}.
\end{proof}
}

We continue to study mean-payoff games.  

\begin{theorem}
\label{thm:MP-complex}
For mean-payoff poorman games, \THRESHBUDG is in PSPACE. For strongly-connected 
games, it is in NP and coNP. For strongly-connected games with out-degree $2$, \THRESHBUDG is  in P. 
\end{theorem}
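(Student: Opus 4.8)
The plan is to reduce all three statements to the analysis of the stochastic mean-payoff games $\RT^r(\G)$ via Theorems~\ref{thm:SCC-MP} and~\ref{thm:MP-general}. For a strongly-connected game, Theorem~\ref{thm:SCC-MP} gives $\MP^r(\G)=\MP(\RT^r(\G))$, and since the payoff \Max can guarantee is monotone non-decreasing in his initial ratio (more probability of \Max moving only helps him in $\RT^r(\G)$), the threshold $\thresh(v)$ is the ratio at which $\MP(\RT^r(\G))$ crosses $0$. Hence $\thresh(v)\ge r$ if and only if $\MP(\RT^r(\G))\le 0$, so \THRESHBUDG for strongly-connected games is exactly the problem of deciding the sign of the value of the stochastic mean-payoff game $\RT^r(\G)$ (and $r$ is rational, so $\RT^r(\G)$ has rational transition probabilities).

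For the NP and coNP bounds I would use positional determinacy of stochastic mean-payoff games (already invoked in the preliminaries): fixing a positional strategy for one player turns $\RT^r(\G)$ into a mean-payoff Markov decision process, whose optimal gain is computable in polynomial time by linear programming. To place \THRESHBUDG in NP, guess \Min's optimal positional strategy, evaluate the resulting maximizing MDP in P, and accept iff its value is $\le 0$; this value equals $\MP(\RT^r(\G))$, so the certificate is sound and complete. The coNP bound is symmetric: guess \Max's optimal positional strategy and verify that the resulting minimizing MDP has value $>0$.

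The interesting case, and the main obstacle, is the P bound for out-degree $2$. Here I would exploit that each vertex $v$ has exactly two successors $a,b$, so the one-step optimality (Bellman) equation of $\RT^r(\G)$ reads $g+h(v)=w(v)+r\cdot\max(h(a),h(b))+(1-r)\cdot\min(h(a),h(b))$, where $g$ is the gain and $h$ the bias. The key observation is the algebraic identity, valid for $r\ge\frac12$, $r\cdot\max(x,y)+(1-r)\cdot\min(x,y)=\max\!\big(r x+(1-r)y,\,r y+(1-r)x\big)$, which collapses the min-max combination over the two successors into a single maximum over two linear forms. Consequently, for $r\ge\frac12$ the game $\RT^r(\G)$ is equivalent to a one-player mean-payoff MDP in which \Max chooses, at each vertex, an orientation of its two successors and nature then routes with probabilities $r$ and $1-r$; strong connectivity of $\G$ together with $r\in(0,1)$ makes this MDP communicating, so its gain, and thus the sign of $\MP(\RT^r(\G))$, is computable in P by linear programming. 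The case $r<\frac12$ is dual: either use the analogous identity collapsing to a minimum, or negate all weights and replace $r$ by $1-r$ as in the proof of Theorem~\ref{thm:SCC-MP}; the cases $r\in\set{0,1}$ are trivial since $\RT^r(\G)$ is then deterministic. I expect the one subtle point to be arguing that the induced MDP is communicating/unichain so that the polynomial-time LP for the average reward is justified. Note also that out-degree $2$ is essential: with more successors $r\cdot\max+(1-r)\cdot\min$ is not a maximum over single edge choices, and the collapse to an MDP fails.

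For the PSPACE bound on general mean-payoff games, I would follow the template of Theorem~\ref{thm:qual-compl}, replacing the $\set{0,1}$ BSCC thresholds of the parity case with the genuine BSCC values. By Theorem~\ref{thm:MP-general} it suffices to compute the threshold of the generalized reachability game: guess the neighbors $u^+,u^-$ for every vertex and, for each BSCC $\G_i$, the optimal successor orientation, and then build one formula of the existential theory of the reals. Its variables are the BSCC ratios $r_i$ (equivalently $\nu_i$ with $r_i=\nu_i/(\nu_i+1)$), the biases $h_i$ of the stochastic games, and a threshold variable $x_u$ per vertex. For a BSCC I would assert that the gain-$0$ Bellman equations $h_i(v)=w(v)+r_i\, h_i(v^+)+(1-r_i)\, h_i(v^-)$ have a solution together with the optimality inequalities $h_i(v^-)\le h_i(v')\le h_i(v^+)$ for all $v'\in N(v)$ (treating $r_i\in\set{0,1}$, i.e.\ always-positive or always-negative values, as separate guesses), which pins $x_v=r_i$; the products $r_i\,h_i(\cdot)$ are degree-$2$ and hence admissible. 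For non-BSCC vertices I would add the poorman reachability recurrence $x_u=x_{u^+}/(1-x_{u^-}+x_{u^+})$ with the ordering constraints of Theorem~\ref{thm:reach}, and finally the constraint $x_v\ge r$. Feasibility of such a formula is decidable in PSPACE by Canny, and since the polynomially many discrete guesses can be made nondeterministically and NPSPACE$=$PSPACE, the whole procedure runs in PSPACE.
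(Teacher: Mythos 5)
Your proposal is correct and follows essentially the same route as the paper: the NP$\,\cap\,$coNP bound comes from reducing to the sign of $\MP(\RT^r(\G))$, the PSPACE bound from an existential-theory-of-the-reals encoding of the BSCC values propagated through the generalized reachability recurrence, and your max/min identity for out-degree $2$ is precisely the algebraic content of the paper's construction of an MDP in which \Max picks an orientation of the two successors and nature routes with probabilities $r$ and $1-r$. The only point worth noting is that your worry about the MDP being communicating is unnecessary, since average-reward MDPs are solvable in polynomial time by linear programming even in the multichain case.
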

\begin{proof}
To show membership in PSPACE, we proceed similarly to the qualitative case, and show a nondeterministic polynomial-space that uses the existential theory of the reals to verify its guess. Given a game $\G$, we construct a program that finds, for each BSCC $S$ of $\G$, the threshold ratio for all the vertices in $V$. We then extend the program to propagate the threshold ratios to the rest of the vertices, similar to Theorem~\ref{thm:MP-general}. Given a strongly-connected game $\G$ and a ratio $r\in [0,1]$, we construct $\RT^{r}(\G)$ in linear time. Then, deciding whether $\MP(\RT^{r}(\G)) \geq 0$, is known to be in NP and coNP.

The more challenging case is the solution for strongly-connected games with out-degree $2$. Consider such a game $\G = \zug{V, E, w}$ and $r \in [0,1]$. We construct an MDP $\D$ on the structure of $\G$ such that $\MP(\D) = \MP^r(\G)$. Since finding $\MP(\D)$ is known to be in P, the claim follows. When $r \geq \frac{1}{2}$, then $\D$ is a max-MDP, and when $r < \frac{1}{2}$, it is a min-MDP. Assume the first case, and the second case is similar. We split every vertex $v \in V$ in three, where $v \in V_\Max$ and $v_1, v_2 \in V_N$. Suppose $\set{u_1, u_2} = N(v)$. Intuitively, moving to $v_1$ means that \Max prefers moving to $u_1$ over $u_2$. Thus, we have $\Pr[v_1, u_1] = r = 1-\Pr[v_1, u_2]$ and $\Pr[v_2,u_1] = 1-r = 1-\Pr[v_2, u_2]$. It is not hard to see that $\MP(\D) = \MP^r(\G)$. 
\end{proof}

\section{Discussion}
\label{sec:disc}
We studied for the first time infinite-duration poorman-bidding games. Historically, poorman bidding has been studied less than Richman bidding, but the reason was technical difficulty, not lack of motivation. In practice, while the canonical use of Richman bidding is a richer notion of fairness, poorman bidding, on the other hand, are more common since they model an ongoing investment from a bounded budget. We show the existence of threshold ratios for poorman games with qualitative objectives. For mean-payoff poorman games, we construct optimal strategies with respect to the initial ratio of the budgets. We show an equivalence between mean-payoff poorman games and random-turn games, which, to the best of our knowledge, is the first such equivalence for poorman bidding. Unlike Richman bidding for which an equivalence with random-turn games holds for reachability objectives, for poorman bidding no such equivalence is known. We thus find the equivalence we show here to be particularly surprising.

We expect the mathematical structure that we find for poorman bidding to be useful in adding to these games concepts that are important for modelling practical settings. For example, our modelling of ongoing auctions made two simplifying assumptions: (1) utility is only obtained from the weights in the graph, and (2) two companies compete for ads and there is full information on the company's budgets. Relaxing both assumptions are an interesting direction for future work. Relaxing the second assumption requires an addition of two orthogonal concepts that were never studied in bidding games: multiple players and partial information regarding the budgets. Finally, the deterministic nature of bidding games is questionable for practical applications, and a study of probabilistic behavior is initiated in \cite{AHIN19}. 

To the best of our knowledge, we show the first complexity upper bounds on finding threshold ratios in poorman games. We leave open the problem of improving the bounds we show; either improving the PSPACE upper bounds or showing non-trivial lower bounds, e.g., showing ETR-hardness. Since threshold ratios can be irrational, we conjecture that the problem is at least {\em Sum-of-squares}-hard. The complexity of finding threshold ratios in un-directed reachability Richman-bidding games (a.k.a. ``tug-of-war'' games) was shown to be in P in \cite{LLPSU99}, thereby solving the problem for uniform undirected random-turn games. Recently, the solution was extended to un-directed biased reachability random-turn games  \cite{PS19}.

This work belongs to a line of works that transfer concepts and ideas between three areas with different takes on game theory: formal methods, algorithmic game theory \cite{NRTV07}, and AI.
Examples of works in the intersection of these fields include logics for specifying multi-agent systems \cite{AHK02,CHP10,MMPV14}, studies of equilibria in games related to synthesis and repair problems \cite{CHJ06,Cha06,FKL10,AAK15}, non-zero-sum games in formal verification \cite{CMJ04,BBPG12}, and applying concepts from formal methods to {\em resource allocation games}; e.g., network games with rich specifications \cite{AKT16} and an efficient reasoning about very large games \cite{AGK17,KT17}.

\small
\bibliographystyle{plain}
\bibliography{../ga}

\end{document}